\title{Existence and hardness of conveyor belts}
\author[Baird, Billey, Demaine, Demaine,  Eppstein,  Fekete, 
Gordon, Griffin,  Mitchell, Swanson]{%
  Molly Baird
\and
  Sara C. Billey
\and
  Erik D. Demaine
\and
  Martin L. Demaine
\and
  David Eppstein
\and
  S\'andor Fekete
\and
  Graham Gordon
\and
  Sean Griffin
\and
  Joseph S. B. Mitchell
\and
  Joshua P. Swanson
}
\thanks{This work was partially supported by the University of
Washington Graduate School and grants from the National Science
Foundation: DMS-1764012, CCF-1618301, and CCF-1616248.}
\newtheorem{theorem}{Theorem}
\newtheorem{lemma}[theorem]{Lemma}
\theoremstyle{definition}
\newtheorem{definition}[theorem]{Definition}
\newtheorem{example}[theorem]{Example}
\newtheorem{remark}[theorem]{Remark}
\let\epsilon=\varepsilon
\begin{document}

\begin{abstract}
An open problem of Manuel Abellanas asks whether every set of disjoint closed
unit disks in the plane can be connected by a conveyor belt, which
means a tight simple closed curve that touches the boundary of each
disk, possibly multiple times.
We prove three main results:
\begin{enumerate}
\item For unit disks whose centers are both $x$-monotone and $y$-monotone, or whose centers have $x$-coordinates that differ by at least two units, 
a conveyor belt always exists and can be found efficiently.
\item It is NP-complete to determine whether disks of varying radii have a conveyor belt,
and it remains NP-complete when we constrain the belt to touch disks exactly once.
\item Any disjoint set of $n$ disks of arbitrary radii can be augmented by $O(n)$ ``guide'' disks so that the augmented system has a conveyor belt touching each disk exactly once, answering a conjecture of Demaine, Demaine, and Palop.
\end{enumerate}
\end{abstract}

\maketitle

\section{Introduction}

In 2001 (later published in \cite{Abe-GRSME-08,Abe-EGC-11}), Manuel
Abellanas asked whether every finite collection of disjoint closed
disks in the plane can be spanned by a \emph{conveyor belt}. A
conveyor belt for such a collection of disks is a continuously
differentiable simple closed curve that touches the boundary of each disk at least
once, is disjoint from the disk interiors, and consists of arcs of the disks and bitangents between them. We may imagine such a curve as made out of an elastic band wrapped tightly around the disks. Consequently, all the disks and the conveyor belt with them can turn without slipping. Two disks have the same \emph{orientation} if they turn in the same direction when the conveyor belt is pulled, or equivalently if they are both on the same side of the belt. See \Cref{fig:large.belt}.

\begin{figure}[ht]
\centerline{\includegraphics[width=0.5\textwidth]{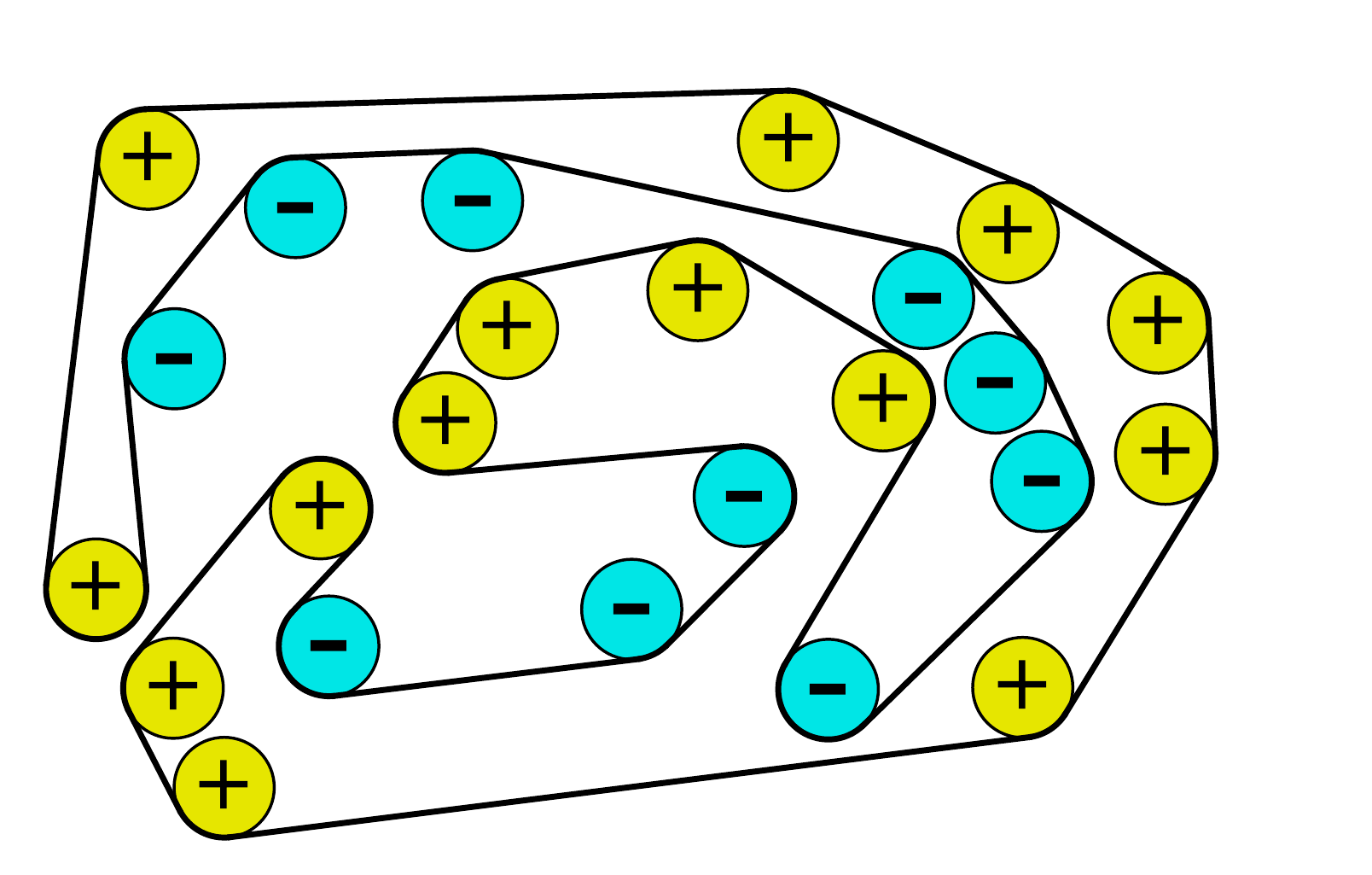}}
\caption{A conveyor belt on 24 nonoverlapping unit disks.  The colors and markings of the disks indicate their orientations.}
\label{fig:large.belt} 
\end{figure}

We review some of the history of Abellanas' question. While Abellanas' question remains open for unit disks, Tejel and Garc{\'\i}a found an example of non-unit disks that have no conveyor belt~\cite{DemDemPal-FE-10}. O'Rourke~\cite{ORo-EGC-11} relaxed the problem by allowing the curve to cross itself or wrap around some arcs of  disks more than once, with specified sets of disks of the same orientation. For his variant of the problem, not every system of disjoint unit disks has a conveyor belt, but for a belt of this type to exist it is sufficient for a certain \emph{hull-visibility graph} of the disks to be connected or for the disks to remain disjoint when expanded by a sufficiently large factor. Demaine, Demaine, and Palop~\cite{DemDemPal-FE-10,DemDem-TCS-15} designed puzzle fonts, specified by a system of disks per character, with a unique conveyor belt in the shape of that character. If the conveyor belt is not shown, decoding the font becomes a puzzle for the viewer.

In this paper, we prove several related results on conveyor belts:
\begin{enumerate}
\item For unit disks whose sorted orders by $x$- and by $y$-coordinates of their centers are the same (i.e., $xy$-monotone), there always exists a conveyor belt (\Cref{thm:separated-has-belt}), and a solution belt can be constructed in linear time after sorting (\Cref{thm:separated-linear-runtime}). The same method also applies to unit disks whose $x$-coordinates differ by two or more units and, more generally, to \emph{monotonically separated} configurations (\Cref{def:monotonically-separated}).
\item Strengthening the known result that not every system of non-unit disks has a conveyor belt~\cite{DemDemPal-FE-10}, we show that the decision problem is NP-complete for non-unit disks (\Cref{thm:multi-touch-completeness}).
\item A variation of the conveyor belt problem in which we constrain the belt to touch each disk exactly once (and disks may still have non-unit radii) is also NP-complete (\Cref{thm:one-touch-completeness}).
\item Both versions can be made to have a conveyor belt solution by adding $O(n)$ (non-unit) ``guide'' disks, answering a conjecture of Demaine, Demaine, and Palop~\cite{DemDemPal-FE-10} (\Cref{thm:guide-disks-suffice}). Conversely, $\Omega(n)$ guide disks are sometimes necessary (\Cref{thm:guide-disks-necessary}).
\end{enumerate}

\section{Preliminaries}
\label{sec:preliminaries}

Between any two disks $D_1$ and $D_2$, there are four \emph{bitangent} line segments, as depicted in \Cref{fig:4tangents}. In the case when the two disks $D_1$ and $D_2$ have centers $(x_1,y_1)$ and $(x_2,y_2)$ with $x_1<x_2$, define the \emph{lower bitangent} to be the bitangent which stays entirely below the line through the centers of $D_1$ and $D_2$. The \emph{upper bitangent} is defined similarly. Define the \emph{inner bitangents} as the two bitangents that cross the line between the centers of $D_1$ and $D_2$.

\begin{figure}[ht]
%\centerline{\includegraphics[width=0.6\textwidth]{TangentLabels.eps}}
\centerline{\includegraphics[width=0.6\textwidth]{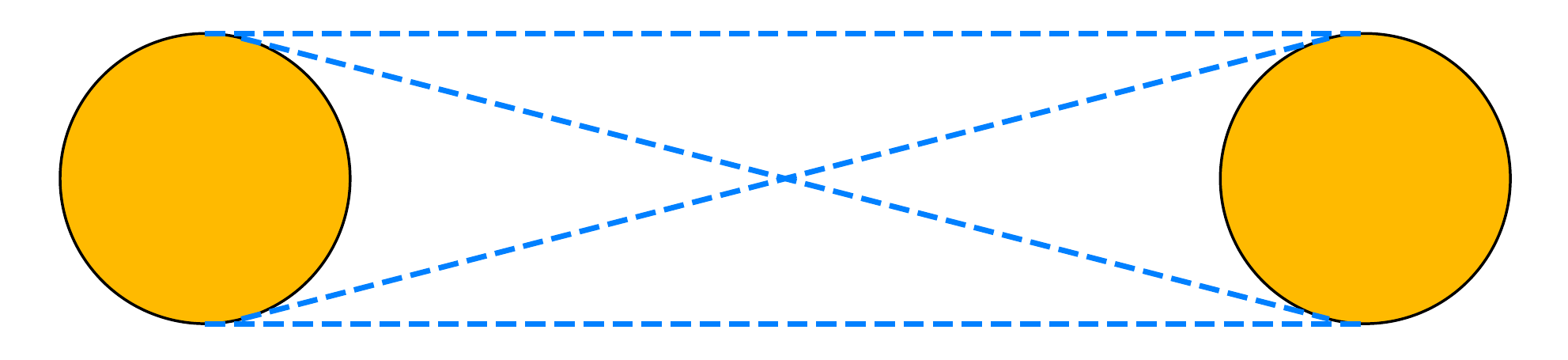}}
\caption{The four possible bitangents between two disks.}
\label{fig:4tangents}
\end{figure}

A conveyor belt for a collection of disks can be specified as follows. Between any two disks, the belt may travel along one of four bitangent line segments. The two disks have the same orientation if the bitangent does not pass through the line connecting their centers and they have different orientations otherwise. We call a bitangent \emph{unblocked} if it does not intersect any disk, except at its endpoints. A conveyor belt may be completely specified by the cyclic order in which it visits each disk, together with the subset of disks lying inside the conveyor belt.

As a warm-up, the following family of disk configurations have an easy-to-construct conveyor belt. In this case, the disks in fact have a conveyor belt that touches each of them exactly once, along an arc of nonzero length.

\begin{example}  
  Suppose the disks are in ``general position'' in the sense that no line intersects three disks. Let $P$ be any simple polygon with the disk centers as vertices; one often refers to such a $P$ as a (simple) polygonalization of the center points. Such a polygon $P$ must exist; a traveling salesman tour of the center points is one such (simple) polygonalization. Another approach is to choose one of the vertices of the convex hull of the circle centers and connect the remaining circle centers into a path in radial sorted order around the chosen vertex. To determine the orientations of the vertices, travel along $P$ cyclically and declare all disks where we turn left to be of one orientation and all disks where we turn right to be of the other orientation.
\end{example}

We will also need to review the Koebe--Andreev--Thurston circle packing theorem.

\begin{theorem}{\cite{Ziegler-1995-Steinitz}}
  For every planar graph $G$ there exists a system of interior-disjoint disks in the plane, corresponding one-to-one with the vertices of $G$, such that two vertices are adjacent in $G$ if and only if the corresponding two disks are tangent.
\end{theorem}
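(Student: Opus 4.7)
The plan is to first reduce to the case when $G$ is a maximal planar graph (a triangulation of the sphere), which is standard: one adds edges to triangulate every face of $G$, finds a packing for the larger graph, and then either ignores the extra tangencies or perturbs radii slightly to remove them. Next I would parametrize candidate packings of a triangulation by radius functions $r : V \to \mathbb{R}_{>0}$. Given radii, each triangle $uvw$ of the triangulation determines a Euclidean triangle with side lengths $r_u + r_v$, $r_v + r_w$, $r_u + r_w$, and the law of cosines pins down the angle $\theta_{uvw}(v)$ at $v$. A valid radius assignment is one for which, at each interior vertex $v$, the sum $\Theta(v) = \sum_{uvw} \theta_{uvw}(v)$ over triangles incident to $v$ equals $2\pi$; the three outer-face vertices receive specified boundary angles (for instance, one normalizes by demanding the three outer disks be mutually externally tangent).

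The heart of the proof is producing such a radius assignment. The cleanest approach I would pursue is variational: there is a convex functional on $(\log r_v)_{v \in V}$ whose partial derivative at $v$ equals $2\pi - \Theta(v)$ (essentially Colin de Verdi\`ere's reformulation of Thurston's original construction). Convexity implies uniqueness of radii up to global scaling, and one must then show the functional attains its minimum in the interior of the domain rather than escaping to a boundary where some $r_v \to 0$ or $r_v \to \infty$. This properness/compactness check is the main obstacle: one has to exploit the combinatorics of the triangulation to argue that as any $r_v$ degenerates, some angle sum $\Theta$ is forced to blow up or collapse, so the gradient cannot vanish near the boundary. Thurston's alternative iterative proof, which locally corrects radii where $\Theta(v)$ differs from $2\pi$, faces the same essential difficulty in its convergence analysis.

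Given valid radii, I would construct the packing triangle by triangle. Fix any one triangle of the triangulation and place its three disks so that they are pairwise externally tangent with the prescribed radii; then propagate to adjacent triangles across shared edges, using the fact that the radii on a shared edge determine the position of the third disk up to reflection, which is pinned down by the planar embedding. The condition $\Theta(v) = 2\pi$ guarantees that as one circles a vertex $v$ using this propagation, the disks close up without angular defect, so there is no local monodromy; the simple connectedness of the sphere minus the outer face then rules out any global obstruction, producing an interior-disjoint disk configuration whose tangency graph is exactly $G$. Finally, a small generic perturbation of radii, if needed, removes any accidental tangencies introduced by the initial triangulation step, delivering the desired ``if and only if'' conclusion.
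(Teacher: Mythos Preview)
The paper does not prove this theorem at all: it is quoted as background, with a citation, and then used as a black box in the NP-hardness reductions. So there is no proof in the paper to compare against; your outline is the standard Thurston/Colin de Verdi\`ere route and is a reasonable plan.

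One step to tighten: in your reduction you triangulate $G$ by \emph{adding edges} and then say you will ``perturb radii slightly'' to kill the resulting unwanted tangencies. A perturbation of the radius of a single disk destroys \emph{all} of its tangencies, not just the unwanted ones, so this does not work as stated, and there is no obvious local move that breaks one tangency while preserving the rest. The clean fix, which you should make explicit, is to triangulate by adding a new vertex inside each non-triangular face and joining it to every vertex of that face; this yields a triangulation with no new edges between original vertices, so after packing and then deleting the auxiliary disks, the tangency graph on the original disks is exactly $G$ and no perturbation is needed.
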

The centers and radii of these disks are algebraic numbers, but they can be of arbitrarily large degree, so it may not be straightforward to represent them exactly as objects in a symbolic algebra system~\cite{BanDevEpp-JGAA-15}. Nevertheless, there exist algorithms that can compute their coordinates numerically, to arbitrary precision, in time polynomial in the number of vertices of $G$ and in the number of bits of precision desired~\cite{M93,CS03}.

A \emph{maximal planar graph} is a graph embedded in the plane so that all of its faces are triangles, including the unbounded outer face. For maximal planar graphs, the corresponding circle packing is unique up to M\"obius transformations of the plane, which preserve circles and their tangencies. In this case, the circle packing can be chosen such that the three vertices of the outer face of the graph are three mutually tangent unit disks, with the rest of the disks all fitting into the triangular region bounded by these three unit disks. For this packing, the disks cannot vary more than exponentially in size: there exists a constant $\epsilon>0$ (not depending on the graph) such that, for any disk $D$ of radius $r$ that touches at most $d$ other disks, the other disks all have radius at least $\epsilon^dr$~\cite{MP94}.

We further recall the \emph{power diagram} for a set of disks in the plane \cite{A87}. The \emph{power distance} from a point $p$ to a disk $D$ with center $c$ and radius $r$ is $|p-c|^2 - r^2$. The power distance is negative if and only if $p$ is inside $D$, and otherwise it is the length of a tangent line from $D$ to $p$. Associated to $D$ is a convex polygon comprised of all points in the plane whose power distance to any disk in the configuration is minimized by its power distance to $D$. The power diagram is the set of these polygons. Since the disks in our configurations are disjoint, $D$ is contained in the interior of its polygon. Geometrically, the power diagram forms a subdivision of the plane into polygons, one per disk, with each disk interior to its polygon. It has $O(n)$ vertices and edges \cite[Lemma 1]{A87} and can be constructed in time $O(n \log n)$ \cite[\S5.1]{A87}.

\section{Monotonically Separated Disk Configurations}

We begin by fixing some notation and vocabulary. Suppose we have a
sequence of $n$ disks with centers $(x_i, y_i)$ for $1 \leq i \leq n$.
By rotating and relabeling the disks if necessary, we can assume
without loss of generality that $x_1 < x_2 < \cdots < x_n$, so the
disks are sorted by their $x$-coordinates.

\begin{definition}\label{def:monotonically-separated}
  We say that a sequence of disks is \emph{monotonically separated} if
  $x_1 < x_2 < \cdots < x_n$ and for every $i < j < k$, the $k$th disk
  is disjoint from the convex hull of the $i$th and $j$th disks, and
  the $i$th disk is disjoint from the convex hull of the $j$th and
  $k$th disks.
\end{definition}

\begin{figure}[ht]
\centering\includegraphics[scale=0.50]{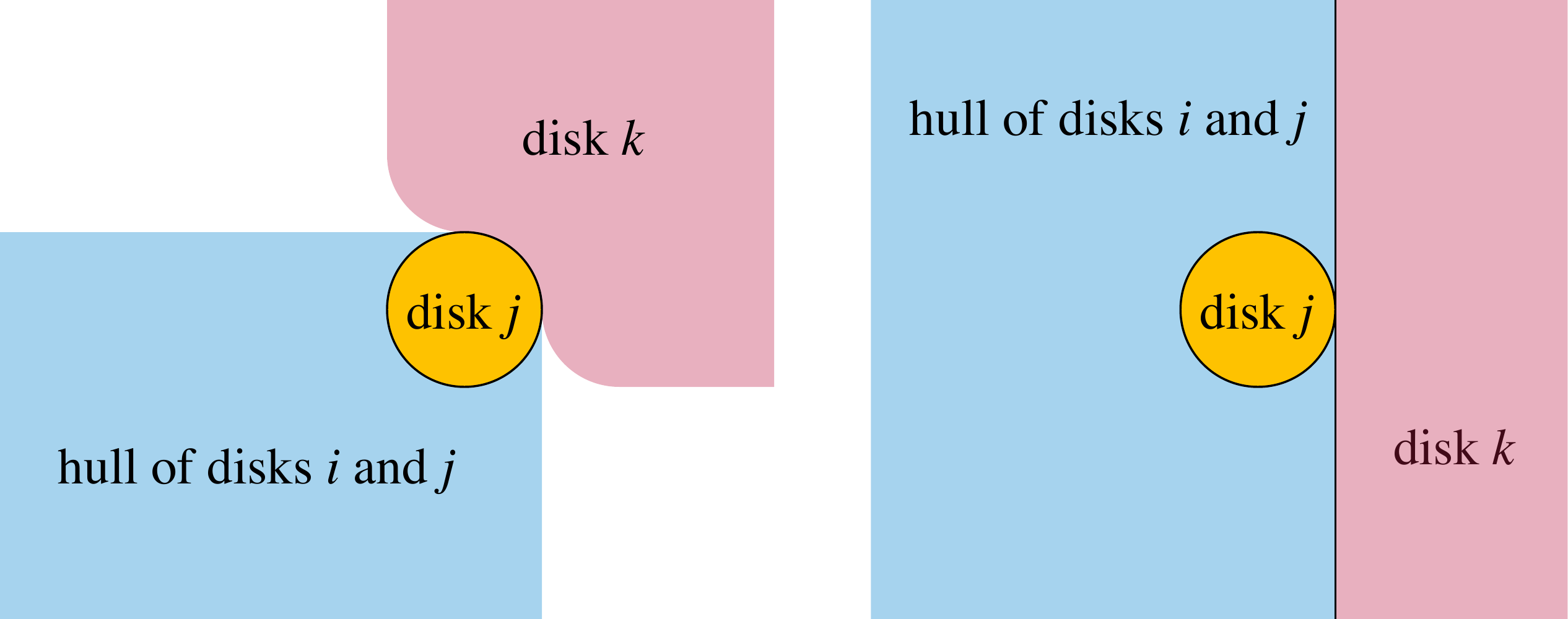}
\caption{Diagrams of monotonically separated sequences. For unit disks
  with indices $i<j<k$ and with centers that are $x$- and $y$-monotone
  (left) or whose $x$-coordinates differ by at least two units
  (right), the convex hull of disks $i$ and $j$ is confined to a
  disjoint region from disk $k$.}
\label{fig:monsep}
\end{figure}

\begin{lemma}
Given a sequence of $n$ unit disks with $x_1 < \cdots < x_n$, the disks are monotonically separated in either of the following two cases:
\begin{enumerate}
  \item ($xy$-monotone) $y_1 < \cdots < y_n$, so the centers are both $x$-monotone and $y$-monotone; or
  \item ($x$-separated) for all $1 \leq i < n$, we have $x_{i+1} - x_i \geq 2$.
\end{enumerate}
\end{lemma}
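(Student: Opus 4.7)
The plan is to reformulate both cases as distance inequalities. For any two unit disks $D_i$ and $D_j$ with centers $p_i, p_j$, the convex hull $\mathrm{conv}(D_i \cup D_j)$ equals the Minkowski sum of the segment $[p_i,p_j]$ with a unit disk, i.e., the stadium $\{p : d(p, [p_i,p_j]) \le 1\}$. Therefore a third disk $D_k$ is (interior-)disjoint from this convex hull exactly when $d(p_k, [p_i, p_j]) \ge 2$, and similarly for $D_i$ against $\mathrm{conv}(D_j \cup D_k)$. Each case of the lemma then reduces to a short verification of such a distance inequality, using disjointness of the input disks.

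For case~(1), I will show that the nearest point of the segment $[p_i, p_j]$ to $p_k$ is the endpoint $p_j$, which reduces the distance to $|p_k - p_j|$. The orthogonal projection of $p_k$ onto the line through $p_i$ and $p_j$ lies at or beyond $p_j$ iff $(p_k - p_j)\cdot(p_j - p_i) \ge 0$. Under the $xy$-monotone hypothesis, both of these vectors have strictly positive coordinates, so their inner product is strictly positive. Hence $d(p_k, [p_i, p_j]) = |p_k - p_j| > 2$ since $D_j$ and $D_k$ are disjoint unit disks. The symmetric calculation, with the roles of $i$ and $k$ reversed, shows the nearest point of $[p_j, p_k]$ to $p_i$ is again $p_j$, so $d(p_i, [p_j, p_k]) = |p_j - p_i| > 2$.

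For case~(2), I project onto the $x$-axis. The stadium $\mathrm{conv}(D_i \cup D_j)$ has $x$-extent $[x_i - 1, x_j + 1]$, while $D_k$ has $x$-extent $[x_k - 1, x_k + 1]$. The separation hypothesis telescopes to $x_k - x_j \ge 2$, hence $x_k - 1 \ge x_j + 1$, and the two $x$-ranges meet in at most the single value $x_j + 1$. In the boundary case $x_k - x_j = 2$, the only point of $\mathrm{conv}(D_i \cup D_j)$ with $x = x_j+1$ is $(x_j+1, y_j)$ and the only point of $D_k$ there is $(x_j+1, y_k)$, so a common point would force $y_j = y_k$; but then $D_j$ and $D_k$ would be tangent, contradicting the disjointness of the input configuration. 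The argument for $D_i$ versus $\mathrm{conv}(D_j \cup D_k)$ is symmetric.

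The main step is the stadium reformulation; after that, both cases are nearly immediate. The only mildly delicate point is the equality case in~(2), which I expect to be the trickiest issue and is handled by appealing to disjointness of the original disks. Case~(1) has no analogous edge case because the relevant inner-product inequality is strict.
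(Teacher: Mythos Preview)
Your argument is correct. The stadium description of $\mathrm{conv}(D_i\cup D_j)$ as the Minkowski sum $[p_i,p_j]+B(0,1)$ is exactly right for unit disks, and the two cases are handled cleanly: in case~(1) the inner-product computation correctly shows that the foot of the perpendicular from $p_k$ to the line $p_ip_j$ lies beyond $p_j$, so the nearest point on the segment is $p_j$ and disjointness of $D_j,D_k$ gives $|p_k-p_j|>2$; in case~(2) the $x$-projection argument is the same separating-line idea the paper uses.

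The paper's own proof is a one-line appeal to a separating line (the vertical line $x=x_j+1$ for the $x$-separated case, and an analogous line read off from the figure for the $xy$-monotone case), with the remaining cases dismissed as ``similar.'' Your approach carries the same geometric content but organizes it differently: instead of exhibiting a separating line, you reduce everything to the single inequality $d(p_k,[p_i,p_j])>2$ via the Minkowski-sum description, which makes the two cases uniform and lets you dispose of case~(1) by a dot-product sign check rather than by naming a specific line. One concrete dividend of your route is that you explicitly handle the boundary situation $x_k-x_j=2$ in case~(2), where the line $x=x_j+1$ supports both sets rather than strictly separating them; the paper's phrasing glosses over this, while your argument resolves it by noting that equality would force $y_j=y_k$ and hence tangency of $D_j$ and $D_k$.
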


\begin{proof}
In the case where the $x$-coordinates differ by at least two units, the vertical line $x=x_j+1$ separates the convex hull of disks $i$ and $j$ from disk $k$. See \Cref{fig:monsep}. The other cases are similar.
\end{proof}

\begin{theorem}
\label{thm:separated-has-belt}
Every monotonically separated sequence of unit disks has a conveyor belt.
\end{theorem}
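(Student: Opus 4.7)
The plan is to construct a conveyor belt explicitly. The starting observation is that monotone separation implies all bitangents between consecutive disks are unblocked: applying the definition of monotone separation to the triple $(m, i, i+1)$ for each $m < i$ and to $(i, i+1, m)$ for each $m > i+1$, we see every $D_m$ with $m \ne i, i+1$ is disjoint from $\mathrm{conv}(D_i \cup D_{i+1})$. Since every bitangent between $D_i$ and $D_{i+1}$ lies inside this convex hull, all four such bitangents are unblocked by any other disk.

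I would then build the belt as an upper path from $D_1$ to $D_n$, followed by a lower path from $D_n$ back to $D_1$, joined by arcs on the left of $D_1$ and the right of $D_n$. At each middle disk $D_i$ the local geometry of the polyline $c_{i-1} c_i c_{i+1}$ determines whether $D_i$ is \emph{upper-wrapped} (when the turn at $c_i$ is clockwise, i.e., $c_i$ lies above the chord $c_{i-1}c_{i+1}$) or \emph{lower-wrapped} (when the turn is counter-clockwise). The upper path wraps each upper-wrapped disk using the short clockwise arc between the tangent points of its incoming and outgoing upper bitangents; for each lower-wrapped disk it uses a bypass, namely the direct upper bitangent between the preceding and following upper-wrapped disks---possibly spanning several bypassed disks at once. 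The lower path is defined symmetrically, and by construction each disk is touched at least once, on whichever path wraps it.

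The main obstacle is verifying that these pieces fit into a simple, continuously differentiable closed curve, particularly at bypass bitangents that skip several middle disks. The $C^1$ property is local: every arc and bitangent meeting at a tangent point share the same tangent line. For simplicity, the upper path stays in the upper-left region of the chain of disks and the lower path stays in the lower-right region, thanks to the unit radii and monotone separation. The delicate step is verifying that a bypass bitangent avoids each of the disks it skips; this should follow because the skipped disks lie on the concave side of the chord spanned by the endpoints of the bypass, so the corresponding unit-offset bitangent is separated from each skipped disk by at least the unit radius. Iterated applications of monotone separation then rule out interference from disks outside the skipped range, completing the verification that the curve is a valid conveyor belt.
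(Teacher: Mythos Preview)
Your approach differs from the paper's, and it has a genuine gap: the local turn-direction criterion you use to classify disks as ``upper-wrapped'' or ``lower-wrapped'' does not force the upper-wrapped disks to form a convex chain, so your upper path need not be simple (and the promised ``short clockwise arc'' need not exist).

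Here is a concrete monotonically separated ($x$-separated) configuration of unit disks where your construction fails. Take centers
\[
c_1=(0,0),\quad c_2=(3,3),\quad c_3=(6,-1),\quad c_4=(9,1),\quad c_5=(12,-1),\quad c_6=(15,3),\quad c_7=(18,0).
\]
Checking local chords, $D_2,D_4,D_6$ are upper-wrapped and $D_3,D_5$ are lower-wrapped, so your upper path is $D_1\!\to\! D_2\!\to\! D_4\!\to\! D_6\!\to\! D_7$ via upper bitangents. But $c_4=(9,1)$ lies \emph{below} the chord $c_2c_6$ (which is the line $y=3$), so at $D_4$ the clockwise arc from the incoming tangent point to the outgoing tangent point is the \emph{long} arc, not the short one. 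Indeed, the upper bitangent $D_2D_4$ (segment from $\approx(3.32,3.95)$ to $\approx(9.32,1.95)$) and the upper bitangent $D_4D_6$ (segment from $\approx(8.68,1.95)$ to $\approx(14.68,3.95)$) cross at $\approx(9,2.05)$, so the upper path self-intersects. Your ``concave-side'' argument for bypass bitangents is fine---a maximal run of lower-wrapped centers is convex and hence lies below the spanning chord---but that argument addresses blockedness of the bypass, not simplicity of the upper path at upper-wrapped disks that fail to be on the upper hull of the upper-wrapped subsequence.

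The paper avoids this by working with a \emph{global} object rather than a local classification: it takes the true upper convex hull of the disks as the top of the belt, and builds a return path through the remaining disks $L=D_0,D_1,\dots,D_k=R$ in $x$-order. Between consecutive non-hull disks there may be upper-hull disks blocking the direct bitangent; the paper handles this with a ``winding process'' (a rotating tangent ray that latches onto intervening hull disks) which produces two partial belts per gap, shown to stay inside $\mathrm{conv}(D_i\cup D_{i+1})$ and hence be unblocked by monotone separation. These partial belts are then glued at each $D_i$. The key difference is that the paper never assumes the non-hull disks form a convex chain---instead it keeps four candidate partial belts at each step and selects compatibly, which is exactly what absorbs the non-convexity that breaks your construction.
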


\begin{proof}
  Consider the \emph{upper convex hull} of the disks,
  which is the part of the boundary of the convex hull passing
  clockwise from the leftmost point of the leftmost disk $L$
  to the rightmost point of the rightmost disk $R$, not including
  these endpoints. Let $U_1, U_2, \ldots$ be the disks that contact
  the upper convex hull in between $L$ and $R$, referred to as the
  \emph{upper hull disks} and listed with increasing $x$-coordinate. Let
  $D_0 = L, D_1, D_2, \ldots, D_k = R$ be the disks that are not in the
  subsequence of upper hull disks, listed with increasing $x$-coordinate.
  We assume $k \geq 3$ since otherwise the boundary of the convex hull of
  the disks is a conveyor belt.

  We begin by describing a key subroutine and some of its
  properties. We will shortly use this subroutine to iteratively construct a
  conveyor belt on the given disks.
  Given two consecutive disks $D_i$ and $D_{i+1}$, the ``winding process''
  constructs two partial conveyor belts from $D_i$ to $D_{i+1}$ as follows.
  First, place a ray along the 
  ``lower'' bitangent between $D_i$ and $D_{i+1}$ with the ray's apex
  on $D_i$ and pointing the ray initially towards $D_{i+1}$.
  Continuously rotate the ray counterclockwise so that it stays tangent to
  $D_i$ at its apex. As we rotate, we may shift
  the apex of the ray to disks other than $D_i$ according to the following
  possible events considered in order.
  \begin{enumerate}%[1.] %%SB: this changes the indentation.
    \item The rotating ray may align with a tangent to some upper hull disk
     $U$ that lies between $D_i$ and $D_{i+1}$ in the $x$-sorted order of
     disks. If there is more than one such $U$, choose the one with the smallest
     $x$-coordinate. In this case, add to the current partial conveyor belt
     the bitangent along the ray between the disk the ray is currently rotating
     around and $U$. Shift the apex of the rotating ray to $U$
     and continue rotating the apex of the ray counterclockwise around $U$
     and inscribing an arc along the boundary of $U$.
   \item The rotating ray may align with a tangent to $D_{i+1}$ for the first time.
     In this case, we report the partial conveyor belt from $D_i$
     through the apex of the ray to this point of tangency as the ``lower'' of
     the two partial conveyor belts between $D_i$ and $D_{i+1}$.
     Continue rotating the ray.
   \item The rotating ray may align with a tangent to $D_{i+1}$ for the second time.
     In this case, report a partial conveyor belt from $D_i$ to $D_{i+1}$ as before, this
     time calling it the ``upper'' partial conveyor belt. Stop the winding process.
  \end{enumerate}

\bigskip

We claim the winding process has the following properties.
  \begin{enumerate}[(a)]
    \item \label{hull} Both partial conveyor belts lie entirely within the convex hull
      of $D_i$ and $D_{i+1}$. More precisely, they lie within the
      wedge-shaped region bounded by the two tangent lines to
      $D_{i+1}$ from the initial starting point on $D_i$.
    \item \label{upper} Both belts are disjoint from the upper convex hull, even though some
      upper hull disks may be touched by either belt.
    \item \label{curve} Both belts are valid partial conveyor belts touching
      $D_i$ and $D_{i+1}$ in the sense that
      they consist of arcs of disks and bitangents between them whose
      union is a continuously differentiable curve without self-intersection
      that is disjoint from all disk interiors.
  \end{enumerate}
  Property (\ref{hull}) is easy to see, and (\ref{upper}) follows from (\ref{hull}).
  For (\ref{curve}), every property is clear except possibly that the partial belts may
  not be disjoint from all disk interiors. The ray initially points along the
  ``lower'' bitangent between $D_i$ and $D_{i+1}$, which cannot intersect
  the interior of an upper hull disk by definition of the upper hull
  so it is reported as part of the lower belt.  As the winding process
  continues, the upper belt remains disjoint from
  all upper hull disk interiors with center $x$-coordinate between $D_i$ and $D_{i+1}$
  by construction. The remaining disks $D_j$ are disjoint from the convex hull of
  $D_i$ and $D_{i+1}$ since the disks are monotonically separated by hypothesis, so both
  belts are disjoint from these remaining disks by (\ref{hull}).
  
  The ``unwinding process'' is a slight variation on the
  winding process. Initially, the ray is placed exactly as before along
  the ``lower'' bitangent from $D_i$ to $D_{i+1}$ except
  its direction is reversed, with the apex still at $D_i$. The apex rotates counterclockwise,
  continuing exactly as before. The belts created by the unwinding
  process also satisfy (\ref{hull})-(\ref{curve}) above by virtually the same reasoning.
  
  Now consider any triple of disks $D_{i-1}, D_i, D_{i+1}$.
  The interior of the convex hull of $D_{i-1}$ and $D_i$ intersects the boundary
  of $D_i$ in an open semi-circle, and the interior of the convex hull of $D_i$ and $D_{i+1}$
  intersects the boundary of $D_i$ in another open semicircle. Suppose
  for simplicity that the centers of $D_{i-1}, D_i, D_{i+1}$ are
  not colinear. The union of these two semicircles is then a single arc on
  the boundary of $D_i$.
  Let $\alpha$ denote the complement of this open arc in the boundary of $D_i$.
  We may now paste
  together two of the four partial belts from $D_{i-1}$ to $D_i$,
  an arc containing $\alpha$, and two of the four partial belts
  from $D_i$ to $D_{i+1}$ as in \Cref{fig:glued_quads},
  resulting in four partial belts from $D_{i-1}$ to $D_{i+1}$.
  Since the disks are monotonically separated, one may check that property (\ref{hull})
  ensures each resulting partial belt from $D_{i-1}$ to $D_{i+1}$
  has no self-intersections and continues to satisfy (\ref{upper}) and (\ref{curve}).

\begin{figure}[ht]
%\centering\includegraphics[width=0.8\textwidth]{glued_quads}
\centering\includegraphics[width=0.49\textwidth]{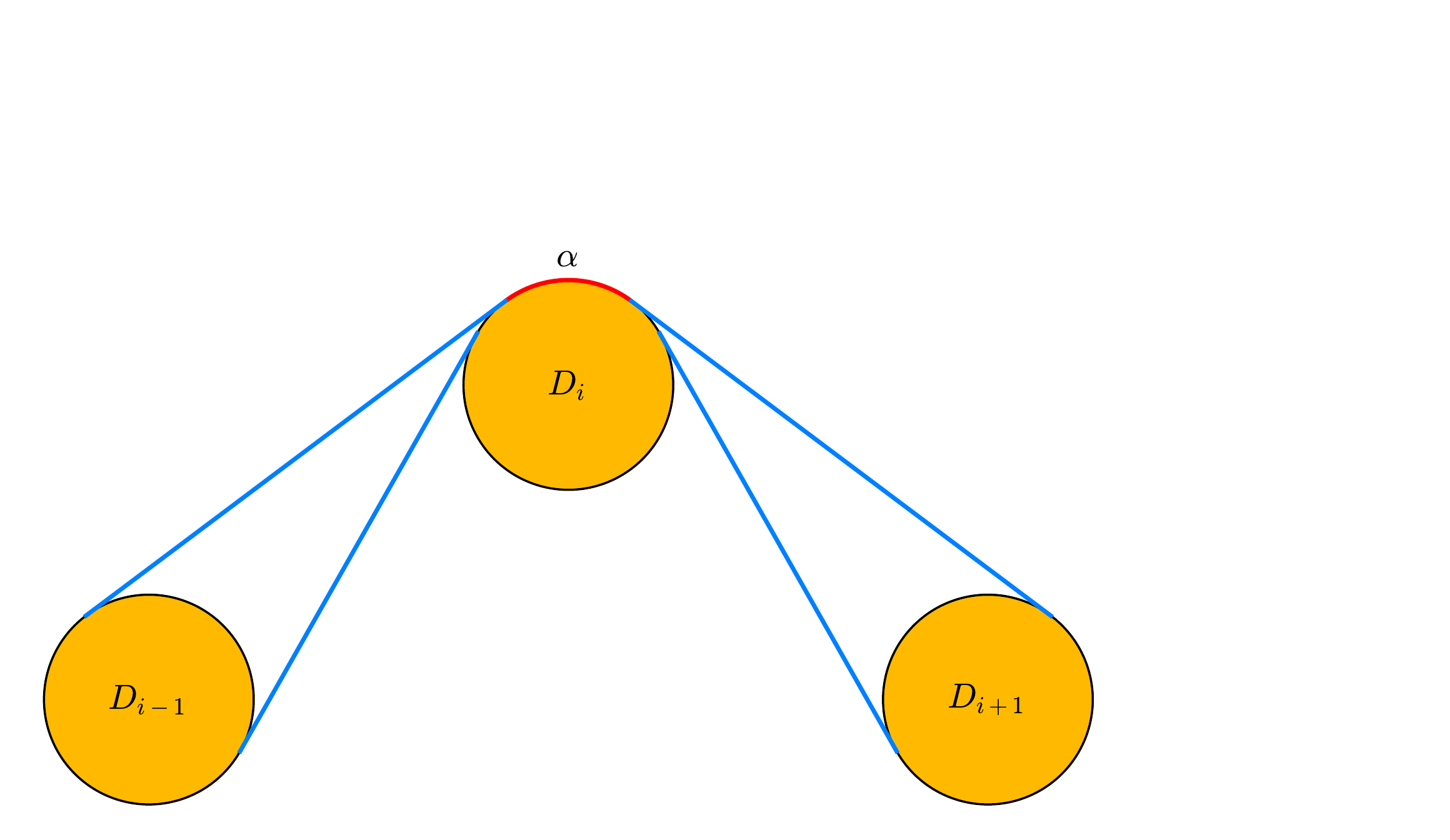}
\centering\includegraphics[width=0.49\textwidth]{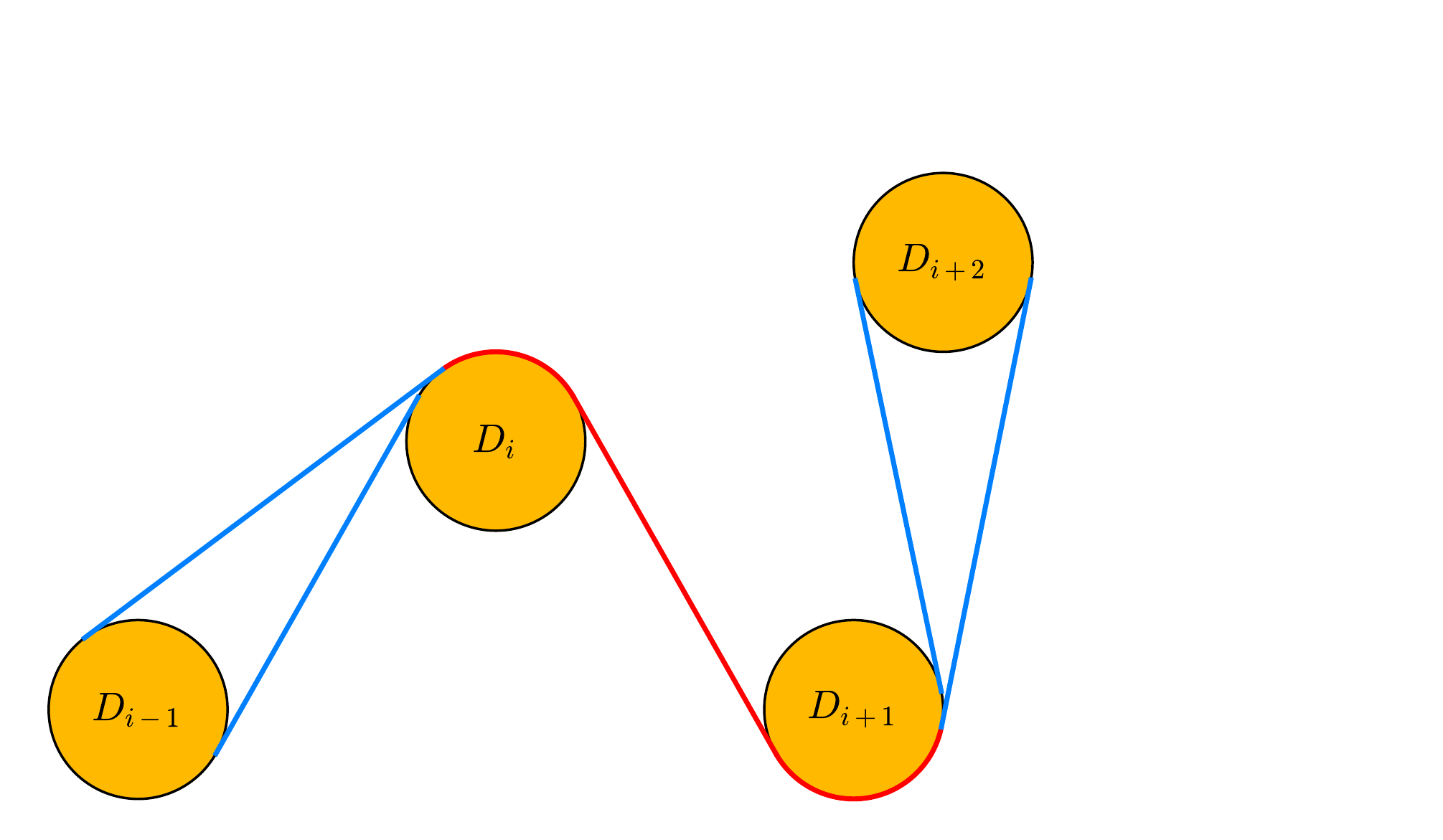}
\caption{The first picture shows the four partial conveyor belts
between three adjacent lower disks.  The red arc labeled $\alpha$ is
required to be part of the final belt.   The second picture shows the
next iteration of the algorithm after extending the 4 possible belts
to include $D_{i+2}$.  Again the red part of the belt is required in
any further growth of the belt, and the blue forks indicate the 4
optional extensions.}
\label{fig:glued_quads}
\end{figure}

  The full algorithm proceeds as follows. Begin with four partial
conveyor belts between two adjacent lower disks. Iteratively extend
these belts as above to produce four partial conveyor belts between
ever larger sets of adjacent lower disks satisfying (\ref{upper}) and
(\ref{curve}). For example, see \Cref{fig:glued_quads}.  Finally,
choose the belt tangent to the lower sides of $L$ and $R$ and glue it
to the upper hull by way of arcs along $L$ and $R$ to create a valid
conveyor belt touching each disk at least once.
\end{proof}

\begin{figure}[ht]
\centering\includegraphics[width=0.6\textwidth]{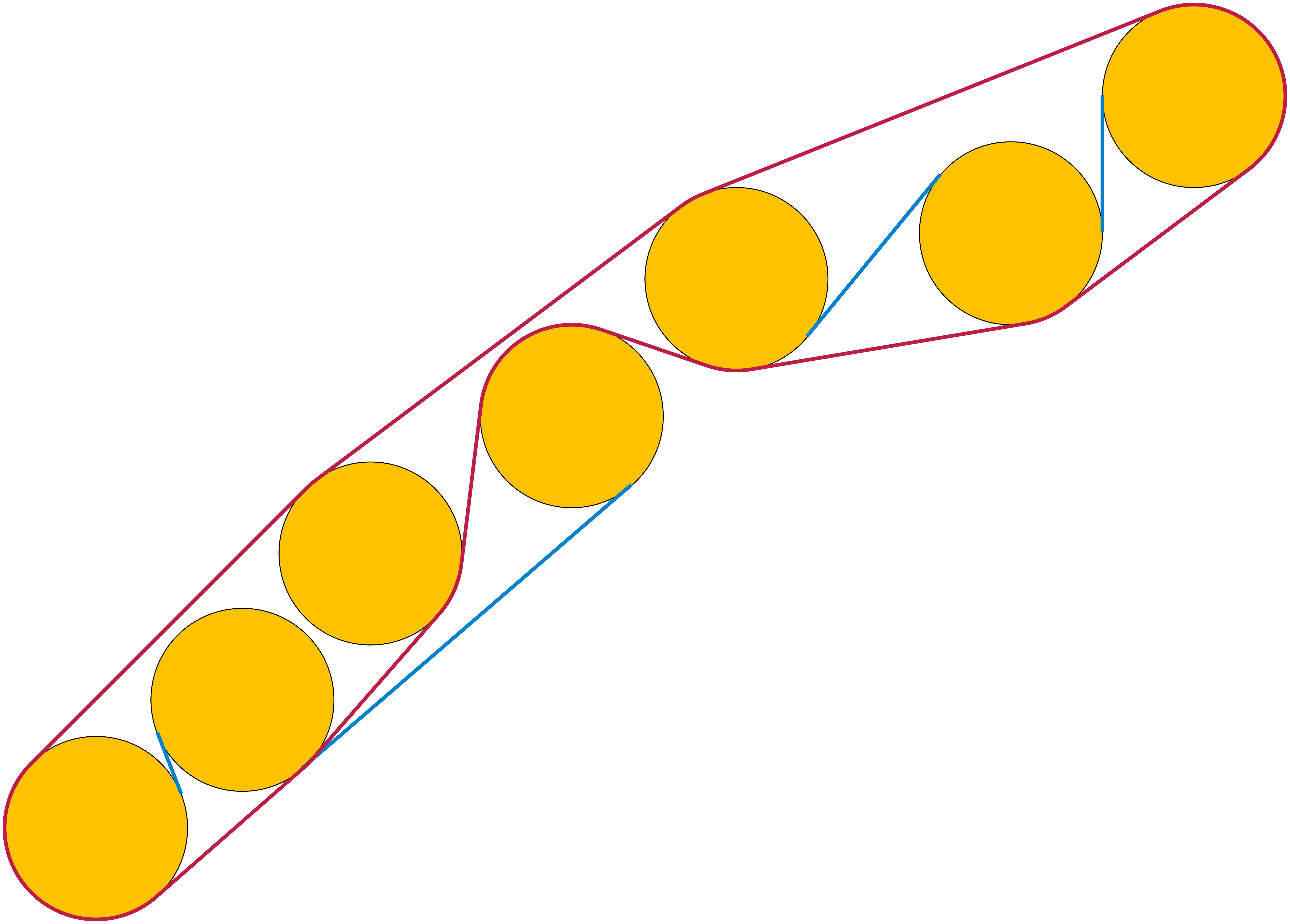}
\caption{A sequence of $x$- and $y$-monotone unit disks and the conveyor belt (red) produced for it by our algorithm. The blue line segments indicate the alternative bitangents found as preliminary results by our algorithm, connecting to the opposite side of each non-upper-hull disk from the bitangent that is eventually used in the conveyor belt.}
\label{fig:monotone}
\end{figure}

\Cref{fig:monotone} depicts the conveyor belt resulting from the
construction in the proof in red.  The blue segments are part of an
upper or lower belt that was unused in the final conveyor belt.
\Cref{fig:ex_monotone} shows another example of the final belt
constructed in the algorithm where the disks are monotonically
separated but not $x$- and $y$-monotone.

\begin{figure}[ht]
\centering\includegraphics[width=0.6\textwidth]{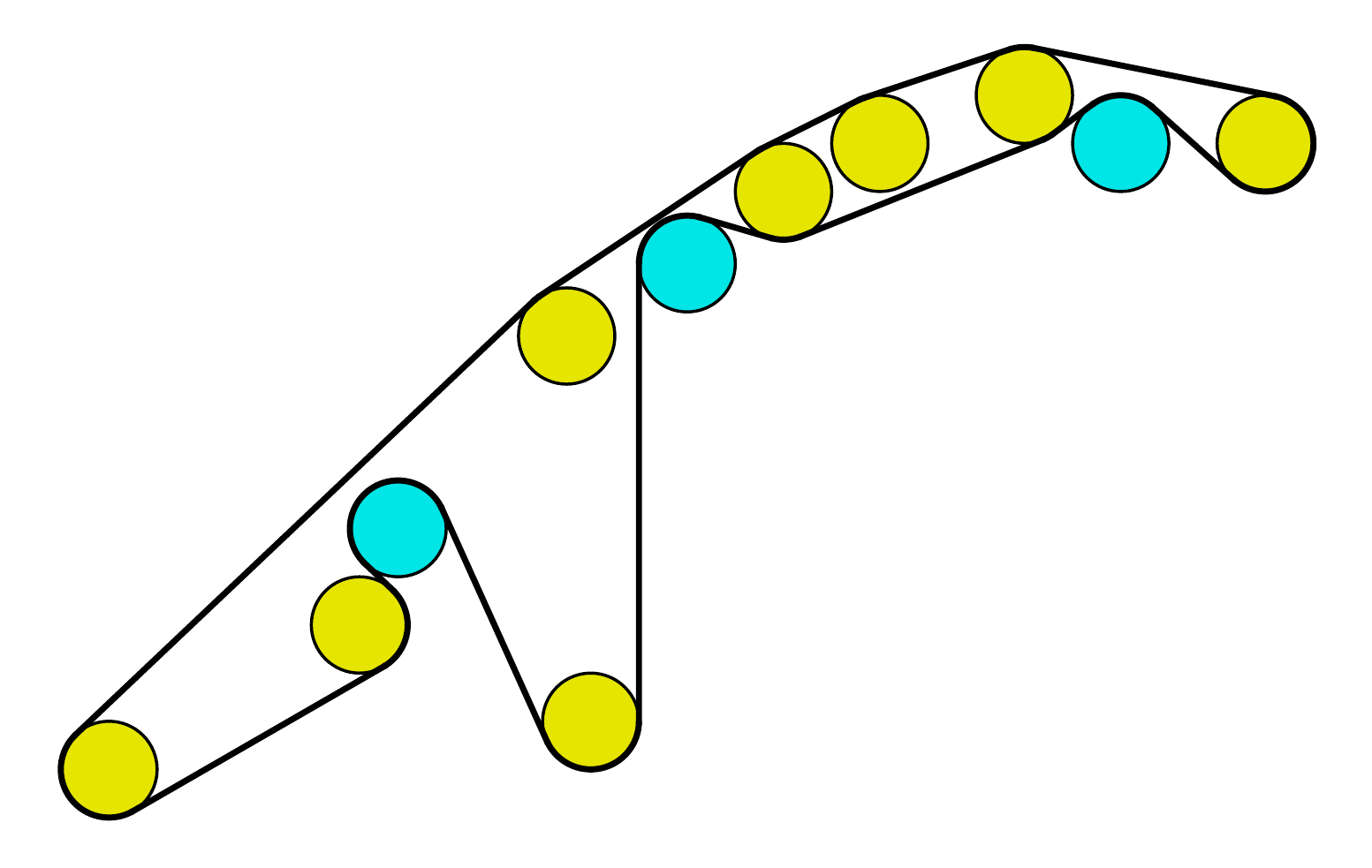}%SB removed axes {ex_monotone} 
\caption{Larger example of the conveyor belt produced by the proof of
\Cref{thm:separated-has-belt}.}
\label{fig:ex_monotone}
\end{figure}

\begin{theorem}\label{thm:separated-linear-runtime}
  The conveyor belt constructed by the algorithm in the proof of
  \Cref{thm:separated-has-belt} can be constructed in linear time
  from an input that lists the coordinates of the disk centers in the
  order of their monotonically separated sequence.
\end{theorem}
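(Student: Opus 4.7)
The plan is to verify that each phase of the algorithm from the proof of \Cref{thm:separated-has-belt} runs in linear time when the disks are supplied in their monotonically separated order. First, I would compute the upper convex hull of the disks together with the partition into the upper hull disks $U_1, U_2, \ldots$ and the lower disks $D_0, D_1, \ldots, D_k$. Since the disks all have the same radius, the upper hull of their union is determined by the upper convex hull of their centers (with semicircular caps at $L$ and $R$), and the standard stack-based algorithm for upper hulls of a presorted point sequence runs in $O(n)$ time. A single left-to-right scan over the output then produces both subsequences.

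Next, I would analyze the winding and unwinding processes. Between two consecutive lower disks $D_i$ and $D_{i+1}$, the events encountered by the rotating ray are in bijection with the upper hull disks whose $x$-coordinates lie strictly between those of $D_i$ and $D_{i+1}$, plus at most two terminal tangencies at $D_{i+1}$. These intervals partition the upper hull sequence, so summed over all $i$ each upper hull disk participates in exactly one winding and one unwinding process. Each event requires only a constant number of geometric primitives (a bitangent computation between two specified disks and a slope comparison), yielding an overall cost of $O(n)$ across all windings and unwindings.

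Third, I would maintain the four candidate partial belts symbolically rather than storing their full geometric descriptions: each partial belt is represented by pointers into the sequence of events already processed. Extending the four candidates through one more middle disk $D_i$ amounts to selecting among a constant number of admissible gluings and recording the single mandatory arc on $D_i$, which is constant work per iteration. Summed over the $O(n)$ iterations this contributes $O(n)$ time, and the final closure of the chosen belt across the upper hull via arcs on $L$ and $R$ is another single-pass operation.

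The main obstacle is the amortized analysis of the winding step: one must check that the next upper hull disk encountered by the rotating ray at each event is simply the next unprocessed upper hull disk in $x$-sorted order, so that the winding loop can be implemented as a left-to-right scan without any search overhead. This follows from the monotone separation hypothesis together with property (a) of the winding process, which confines the ray to the wedge cut out by $D_i$ and $D_{i+1}$ and thereby forces tangent events with upper hull disks in the interval to occur in the same order as their $x$-coordinates. Combining all four bounds yields the claimed $O(n)$ running time.
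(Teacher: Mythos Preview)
Your proposal is correct and follows essentially the same approach as the paper: both compute the upper hull in linear time via the standard stack-based scan on the presorted centers, and both argue that the winding/unwinding processes cost $O(n)$ in total by observing that each upper hull disk is charged to at most one interval $(D_i, D_{i+1})$ and that each step requires only $O(1)$ geometric primitives. The only cosmetic difference is in how the per-step work is justified: the paper invokes convexity of the $U_j$ sequence to argue that at each event only the two extreme upper hull disks in the current interval are candidates, whereas you argue that the tangency events occur in the same left-to-right order as the $x$-coordinates so that a single monotone scan suffices; these are two phrasings of the same convexity observation. One small nit: your word ``bijection'' is slightly stronger than what you actually need or prove---an upper bound (each upper hull disk contributes at most one event per winding) is all the amortization requires.
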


\begin{proof}
The upper convex hull can be constructed from this sorted order by a standard Graham scan algorithm, in Andrew's variation using the left-to-right sorted order rather than radial order~\cite{And-IPL-79}.
In this algorithm, we use a stack data structure to store the sequence of disks in the upper hull of ever-longer prefixes of the input. To extend the prefix, we check if the next disk in the sequence together with the top two disks on the stack form an upper hull that is unchanged if the middle of these three disks is omitted. While this is true we pop the stack, effectively removing the middle disk from consideration. After no more such pops can be performed, we push the new disk onto the stack.
At the end of this process, the disks that remain on the stack form the upper hull. Each disk is pushed and popped at most once, so the total time for this part of the algorithm is linear.

To simulate the continuous ray-rotation process by which we generated the partial conveyor belts through the remaining disks, we replace it by a discrete process consisting only of the events at which the continuous process undergoes a discrete change, when the rotating ray reaches a tangency with $D_{i+1}$ or with one of the disks $U_j$.  There may be many disks $U_j$ between $D_i$ and $D_{i+1}$; however, by convexity of the sequence of disks $U_j$, only two of these disks can cause tangency events: the one that is closest in the sequence to $D_i$, and the one that is closest in the sequence to $D_{i+1}$. Therefore, constructing the sequence of discrete events entails only comparing $O(1)$ possible tangencies at each step to determine which of them happens first, and can be performed in constant time per step. Each disk is visited at most once during this process, so the total time for this part of the algorithm is again linear.
\end{proof}

\begin{figure}[t]
\centering\includegraphics[width=0.55\textwidth]{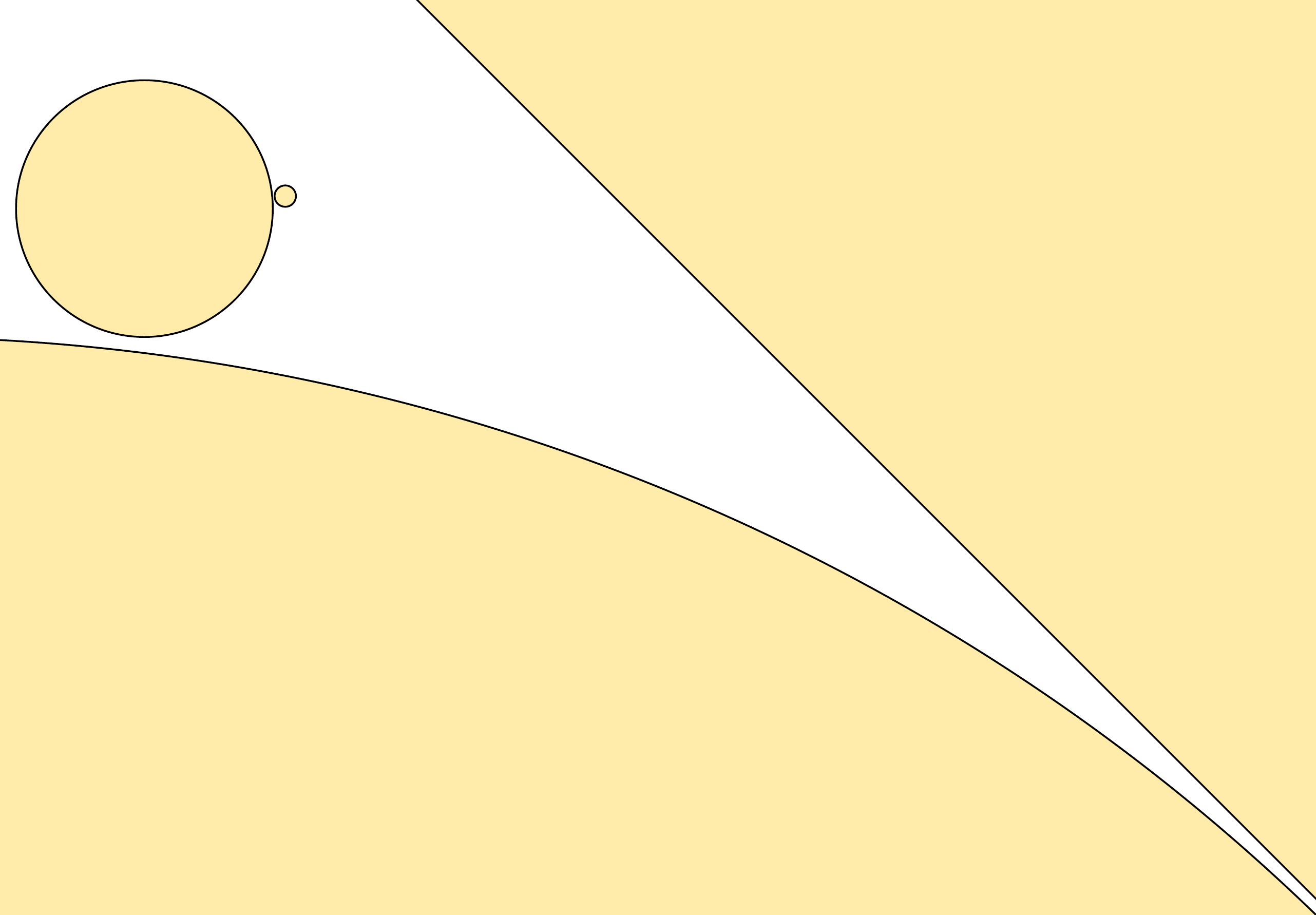}\quad\qquad
\includegraphics[width=0.1\textwidth]{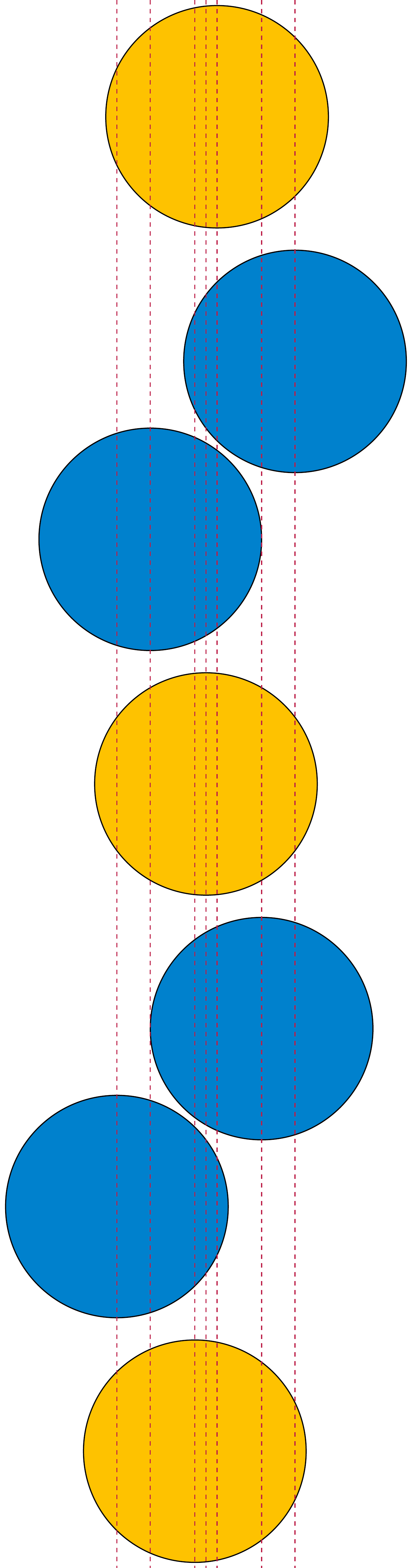}
\caption{Left: A sequence of $x$- and $y$-monotone disks with no bitonic conveyor belt. The second-to-last disk in this sequence sorted by $x$-coordinates of centers (the smallest disk) has no bitangents to the last and largest disk, so it must be a local but not global maximum in the sequence of indices of any conveyor belt.
Right: A sequence of unit disks with no bitonic conveyor belt. The three yellow disks have no unblocked bitangents, and must be separated from each other by at least three local minima or maxima.}
\label{fig:monotone-blocked}
\end{figure}

A finite sequence of distinct numbers is called \emph{bitonic} if it has a unique local maximum and a unique local minimum when viewed as a cyclic sequence. Equivalently, it is \emph{unimodal} up to a cyclic rotation. Since our method constructs conveyor belts from left-to-right, the result is bitonic with respect to the sequence of $x$-coordinates of disk centers. When a set of disks has a bitonic conveyor belt, it is possible to find such a belt in polynomial time by a straightforward adaptation of the well-known dynamic programming algorithm for bitonic traveling salesperson tours~\cite{CLRS}. However, for disks with $x$- and $y$-monotone centers but non-uniform radii, it is not always possible to find conveyor belts that are bitonic. \Cref{fig:monotone-blocked} (left) provides a counterexample. For unit disks that are not monotonically separated, it is also not generally possible to have a tour that is bitonic. In \Cref{fig:monotone-blocked} (right) there are three disks that are consecutive in the sorted order by $x$-coordinates, but that have no unblocked bitangent between any two of them. All three disks must be separated from each other in the cyclic sequence of contacts with the conveyor belt, and between any two of them there must be a local minimum or local maximum of the $x$-coordinates.

\section{One-Touch NP-Completeness}

In this section we consider the following \emph{one-touch conveyor belt problem}.
We are given as input a collection of disjoint disks in the plane specified by integer center coordinates and radii. The goal is to determine whether there exists a conveyor belt that contacts each disk exactly once.

Our proof that this problem is NP-complete will serve as a warm up for the proof that the decision problem for conveyor belts that are not restricted to one touch is also NP-complete.

\begin{theorem}
\label{thm:one-touch-completeness}
The one-touch conveyor belt problem is NP-complete.
\end{theorem}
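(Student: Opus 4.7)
The plan is to establish both NP membership and NP-hardness. For membership in NP, the natural certificate is the cyclic sequence in which the belt visits the disks, together with, for each consecutive pair in the cycle, the choice of one of the four bitangents between them, and an assignment of each disk to one of the two sides of the belt. Given such a certificate one can verify in polynomial time that consecutive bitangents join continuously-differentiably at a common tangent point on each shared disk, that each chosen bitangent is unblocked by the other disks, that the resulting closed curve is simple, and that the one-touch condition holds (which is immediate from the format of the certificate).

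For NP-hardness, the plan is a reduction from Hamiltonian Cycle in $3$-connected cubic planar graphs, which is classically NP-complete. Given such a graph $G$, first extend $G$ to a maximal planar graph $T$ by adding a small auxiliary gadget of vertices and edges inside each face, and apply the Koebe--Andreev--Thurston theorem to obtain a tangent disk packing realizing $T$. Using the numerical algorithms cited in the preliminaries, compute the disk centers and radii to polynomial precision and then round to integer coordinates without altering the combinatorial structure of the packing, which is possible because the packing's critical geometric features (tangencies, blocked bitangents, ordering of tangents around a disk) are each separated by a positive margin once the coordinates are appropriately scaled.

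The central combinatorial claim is that, in the resulting configuration, possibly augmented by small ``blocker'' disks placed across bitangents one wishes to forbid, the only unblocked bitangents between disks corresponding to original vertices of $G$ are the outer bitangents between pairs of tangent disks, i.e., those corresponding to edges of $G$. Each auxiliary disk inside a face of $G$ will be surrounded by a gadget whose only one-touch completion is a short detour using the adjacent bitangents, so that any one-touch conveyor belt, after contracting these forced local detours, traces out a cycle of outer bitangents between original vertex-disks that visits each such disk exactly once, i.e., a Hamiltonian cycle of $G$. Conversely, any Hamiltonian cycle of $G$ can be realized as a one-touch conveyor belt by choosing consistent outer bitangents and inflating each face-traversal into the prescribed gadget detour.

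The main obstacle will be enforcing that the intended correspondence is exact: ruling out stray bitangents between vertex-disks that happen to have unobstructed sight lines through the packing, and designing the face gadgets so that their only one-touch completion respects the incoming and outgoing edges of the Hamiltonian cycle. We expect to handle the first issue by inserting small blocker disks along each non-edge bitangent (using the constant-separation property of Koebe--Andreev--Thurston packings of triangulations, as noted in the preliminaries, to ensure there is room), and the second by a careful gadget design whose cases can be checked by hand. Verifying that the orientation-labeling constraints around the belt are simultaneously satisfiable once all local choices are made will be the most delicate step, and we anticipate devoting the bulk of the proof to explicitly describing and validating the gadgets.
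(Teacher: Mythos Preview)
Your NP-membership argument is fine and essentially matches the paper's. Your hardness reduction, however, is considerably more elaborate than what the paper does, and one of its concrete choices is wrong as stated.

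The paper reduces directly from Hamiltonian cycle in \emph{maximal planar graphs} (Wigderson), not from cubic $3$-connected planar graphs. It represents the given maximal planar graph $G$ by a Koebe--Andreev--Thurston packing and then shrinks every disk by a factor $1-\delta$. Because every face of $G$ is already a triangle, the shrunken packing has unblocked bitangents only between formerly tangent pairs, so no blocker disks and no face gadgets are needed at all: a one-touch belt immediately yields a Hamiltonian cycle. For the converse, the paper arranges for $|V(G)|$ to be even, assigns the disks \emph{alternating} orientations along the cycle, and uses only \emph{inner} bitangents. After shrinking, these are short segments near the former tangency points, hence unblocked and pairwise disjoint, giving a simple closed curve.

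Your plan has a specific gap: you propose that the belt use \emph{outer} bitangents between adjacent vertex-disks. In a circle packing of a triangulation, each edge $uv$ borders two triangular faces, and the third disk of each of those faces sits against the external tangent lines of $D_u$ and $D_v$ (for three mutually tangent unit disks, the external tangent of two of them passes through the interior of the third). After an arbitrarily small shrink those outer bitangents remain blocked; the bitangents that actually open up are the inner ones. Beyond this, your reduction depends on face gadgets and blocker disks you do not describe, and you flag global orientation consistency as ``the most delicate step'' without proposing a mechanism. The paper's alternating-sign trick on an even cycle with inner bitangents disposes of all three issues at once. Interestingly, the shape of your plan---dualize to a cubic $3$-connected planar graph and drop a small extra disk into each triangular face---is exactly what the paper uses for the \emph{multi-touch} hardness proof; for the one-touch problem the direct maximal-planar reduction is both simpler and avoids the obstacles you identify.
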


\begin{proof}
We follow the standard outline for a proof that a problem $X$ is
NP-complete, by a polynomial-time reduction from a known NP-complete
problem $Y$. To do so $X$ and $Y$ must both be decision problems
(problems with a yes or no answer). We must show that $X$ is in NP,
and that there exists a polynomial time algorithm for translating all inputs of $Y$ into inputs of $X$ that preserves the answer of each translated input.

To show that the one-touch conveyor belt problem is in NP, we describe a nondeterministic polynomial-time algorithm for it: an algorithm that guesses a \emph{solution} that can be described in a polynomial number of bits, verifies in deterministic polynomial time that the solution is valid, and if so answers yes. It must be the case that every solvable instance has a solution that causes this algorithm to answer yes, and that no non-solvable instance has such a solution. In our case, the solutions can specify the cyclic order of contacts along the belts and the orientations of each disk, information that can be specified in only $O(n\log n)$ bits for $n$ disks, a polynomial number. If there are $n$ disks, since the belt is one-touch, the resulting belt will have $2n$ arcs or bitangent segments. The verification algorithm must then test that only consecutive arcs or bitangents intersect, and then only in a single point with compatible directions. It is straightforward to see that these verifications may be performed with $O(n^2)$ tests, so the verification algorithm takes polynomial time, as required.

To prove NP-hardness, we reduce from a known NP-complete problem, determining the existence of a Hamiltonian cycle in a maximal planar graph. This problem was proven NP-complete by Wigderson in 1982~\cite{W82}. It is straightforward to modify Wigderson's construction to ensure that the maximal planar graphs that it produces always have an even number of vertices; see \Cref{rem:Wigderson}.

The reduction begins by representing a given maximal planar graph as a
system of touching disks using the circle packing theorem. The only
unblocked bitangent segments in the circle packing are outer tangents
for the three outermost disks. Shrinking each disk by a factor of
$1-\delta$ for $\delta$ sufficiently small will create additional unblocked
bitangents only between adjacent disks, see \Cref{fig:octahedral-pulley}.
When finding the largest possible $\delta$ for a given triple of
disks, we will perform a constant number of polynomial operations and
root extractions using the disk radii, so $\delta$ is at least a
polynomial in the smallest possible ratio $\epsilon^n$ between tangent
disks.

%%SB: removed because the shrinkage was too much and 4 of the
%%bitangents were unblocked between the pale disks.  The picture in
%%octahedral-pulley.eps shows the shrinkage so I moved it up here and
%%referenced it instead.
% \begin{figure}[ht]
% \centering\includegraphics[scale=0.5]{}\hspace{1cm}
% \includegraphics[scale=0.5]{}
% \caption{Three mutually tangent disks before and after shrinking. We may imagine shining yellow light through a ``keyhole'' created by shrinking the disks. So long as the light through the keyhole is not tangent to the red disk, no unblocked bitangents will be created.}
% \label{fig:keyhole}
% \end{figure}

\begin{figure}[t]
\centering\includegraphics[width=0.65\textwidth]{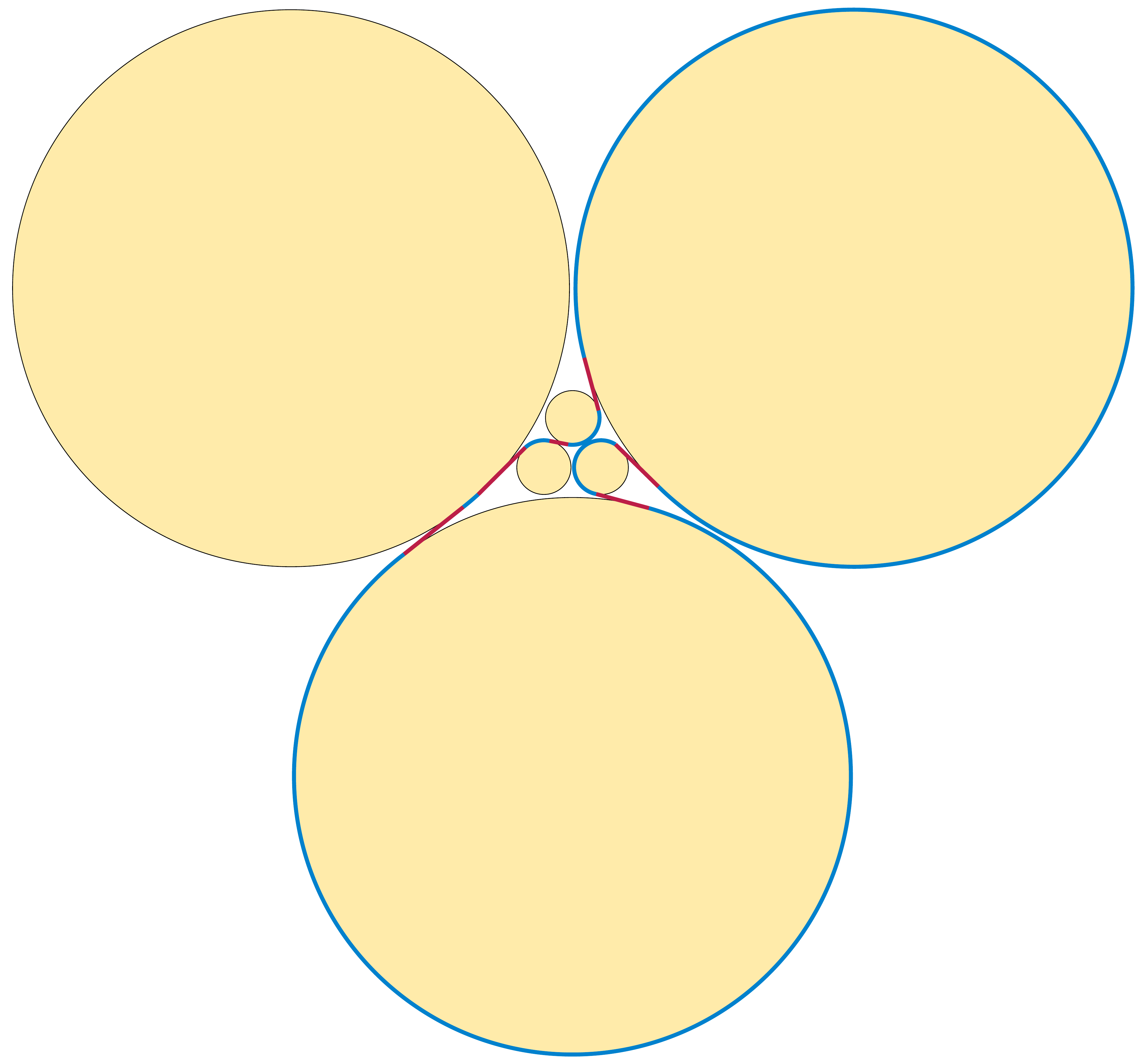}
\caption{A one-touch conveyor belt for a system of disks constructed by slightly shrinking a circle packing for the graph of the edges and vertices of an octahedron. The belt touches the disks in the order given by a Hamiltonian cycle on the octahedron, and crosses between disks using only inner tangents. Circular arcs on the belt are shown in blue, and bitangents are shown in red.}
\label{fig:octahedral-pulley}
\end{figure}

The numerical precision needed to represent the disk centers and their radii accurately enough to perform this shrinkage, and to avoid creating additional bitangencies through numerical inaccuracies, is therefore also at most exponential in the number of disks. Numbers with this level of precision may be represented using a linear number of bits, allowing an approximate numerical representation of the circle packing and its shrunken system of disjoint disks to be constructed numerically in polynomial time. By scaling this numerical representation appropriately, we can cause all the disk centers and radii to become integers, as needed for an input to the one-touch conveyor belt problem.

It remains to verify that this transformation from graphs to systems
of disks preserves the yes-or-no answers to every input. That is, we
must show that a given maximal planar graph $G$ with an even number of
vertices has a Hamiltonian cycle if and only if the resulting system
of disks has a one-touch conveyor belt. We begin by arguing that, when
a one-touch conveyor belt exists, a Hamiltonian cycle also
exists. Thus, suppose that we have a one-touch conveyor belt
$B$. Since every face of $G$ is a triangle, the only unobstructed
bitangents that exist for the disks of the construction are between
shrunken disks that, before shrinking, were tangent. Therefore, $B$
can only use these segments to move from one disk to another, and each
two consecutive disks along $B$ must be adjacent in $G$.  Thus, the
cyclic ordering of disks in $B$ corresponds to a cyclic ordering of
vertices in $G$ such that each two consecutive vertices in the
ordering are adjacent; that is, to a Hamiltonian cycle.

Conversely, suppose we have a Hamiltonian cycle $C$ in $G$; we must show that there exists a one-touch conveyor belt in the corresponding system of shrunken disks. To do so, we use the cyclic ordering of vertices in $C$ as the cyclic order of disks in a belt $B$, and we assign the disks signs that alternate between positive and negative in this cyclic order. This assignment is well-defined since the number of vertices in $G$ is assumed to be even. It corresponds to a curve composed of circular arcs and inner bitangents of pairs of circles that, before shrinking, were tangent. The shrinking process that we perform causes all such inner bitangents to be unobstructed, and disjoint from all the other inner bitangents of other formerly-tangent pairs, so no two such segments can cross, nor can a bitangent segment cross any circular arc. Therefore, the resulting curve is a simple closed curve, composed of arcs and bitangents, with one arc per disk, so it is a one-touch conveyor belt as required.
\end{proof}

\begin{remark}\label{rem:Wigderson}
  We may modify Wigderson's construction~\cite{W82} to produce graphs with evenly many nodes as follows. From Wigderson's graph $K$, one may create a graph $K'$ by adding an extra vertex at the midpoint of the edge of $K$ that starts at the central vertex and travels northwest, followed by triangulating the two resulting rectangles by adding two extra edges emanating from the new vertex. In forming Wigderson's graph $M$, use one copy of $K$ and one copy of $K'$. The resulting variant of Wigderson's graph $N$ will have $58$ instead of $55$ nodes, so the graphs $G'$ will have a multiple of $58$ nodes, an even number.
\end{remark}

\Cref{fig:octahedral-pulley} depicts this construction for the maximal planar graph given by the vertices and edges of a regular octahedron. The one-touch belt shown in the figure connects the six disks shown in the order corresponding to a Hamiltonian cycle through the six vertices of the octahedron.

\begin{remark}\label{rem:counting.problem}
Whether counting the number of solutions is \#P-complete remains open. The NP-completeness reduction above is close to parsimonious, but it counts cycles through edges of the outer triangle differently than cycles that do not use those edges. More precisely, one may check that the number of conveyor belts corresponding to a given Hamiltonian cycle is $1$ if the cycle involves $0$ or $1$ edges of the outer triangle, and $2$ if the cycle involves $2$ edges of the outer triangle.
\end{remark}

\section{Multi-Touch NP-Completeness}

\begin{theorem}
\label{thm:multi-touch-completeness}
It is NP-complete to determine whether a given system of disks has a conveyor belt, even allowing the belt to touch a single disk multiple times along disjoint arcs.
\end{theorem}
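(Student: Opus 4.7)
The plan is to reduce from the same NP-hard problem as in \Cref{thm:one-touch-completeness}---Hamiltonian cycle in maximal planar graphs with an even number of vertices---by reusing the shrunken circle-packing skeleton and augmenting it with additional ``blocker'' disks designed to rule out multi-touch solutions that would not correspond to a Hamiltonian cycle. Membership in NP is essentially the same as before: the witness specifies the cyclic order of arc-contacts along the belt, their orientations, and the inside/outside structure; since any conveyor belt on $n$ disks has only $O(n)$ arcs by a planarity argument (each arc separates two regions of the belt-plus-disks planar subdivision), the witness has $O(n \log n)$ bits and is verifiable in polynomial time by pairwise non-crossing checks.

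For NP-hardness, given a maximal planar graph $G$, I would first build the shrunken circle packing exactly as in \Cref{thm:one-touch-completeness}, so that the only unblocked bitangents among original disks are between formerly-tangent pairs. I would then add one small ``blocker'' disk inside each bounded triangular face of $G$, and one inside the outer face, with radii chosen so small that (a) the blockers do not unblock any new bitangents among the original disks, and (b) each blocker has unblocked bitangents only to the three original disks bounding its face. The forward direction, Hamiltonian cycle $\Rightarrow$ conveyor belt, then extends the one-touch construction: the Hamiltonian cycle produces a belt on the original disks exactly as in \Cref{thm:one-touch-completeness}, and each blocker is absorbed into the belt by a short local detour using one of its three bitangents to its face disks.

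The reverse direction is the main obstacle. A conveyor belt is a simple closed curve partitioning the plane into interior and exterior regions, so every blocker lies on exactly one side. I would argue that, because every blocker must be touched, the belt runs along the combinatorial boundary between the set of ``interior'' faces and the set of ``exterior'' faces of $G$; since $G$ is 3-connected, this boundary is a simple cycle in $G$, and the requirement that every original disk also be touched forces the cycle to visit every vertex, i.e., to be Hamiltonian. This is in spirit Tutte's dual reformulation of Hamiltonicity in planar graphs, transferred into the continuous geometric setting via the blocker gadgets.

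The hard part of carrying this plan out rigorously is twofold: choosing the blocker radii and placements so that conditions (a)--(b) hold simultaneously with the feasibility of the forward construction, and ruling out degenerate multi-touch configurations on the blockers themselves (for example, a belt that zig-zags across a single blocker or uses two of a blocker's three bitangents in a way that disconnects the Hamiltonian correspondence). The numerical precision and bit-complexity analysis from \Cref{thm:one-touch-completeness} should carry over essentially unchanged, since we add only $O(n)$ additional disks whose radii can be taken as a further polynomial fraction of the already exponentially small ratio $\epsilon^n$ appearing in the Koebe--Andreev--Thurston packing bound.
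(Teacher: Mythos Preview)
Your proposal contains the right core gadget---placing a small disk inside each triangular face of the shrunken circle packing---but you have set up the reduction from the wrong side of planar duality. The paper does \emph{not} reduce from Hamiltonicity of the maximal planar packing graph; it reduces from Hamiltonicity of \emph{cubic 3-connected planar graphs} $G$, forms the circle packing of the maximal planar dual $G'$, inserts a small disk into each interior triangle, and additionally places three small disks in the gaps of the outer triangle. It then classifies every way a belt can traverse an interior triangle (four local patterns, according to whether the crossings between adjacent triangles are single-ply or double-ply) and uses the outer gadget to show that only the all-single-ply pattern can close up. That pattern visits each triangle exactly once, so a belt encodes a Hamiltonian cycle through the \emph{faces} of $G'$---the vertices of $G$---not through the packing disks.

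Both of your directions break against this. In the reverse direction, nothing in your argument rules out a belt that enters a triangle via a double-ply ``tentacle,'' touches the blocker, and retreats; your interior/exterior face-partition picture then fails, and a single blocker in the outer face does not play the role of the paper's three-disk outer gadget in eliminating double-ply solutions. Such belts correspond to Hamiltonian cycles in the \emph{dual} of your $G$, which is not equivalent to Hamiltonicity of $G$ itself. In the forward direction, a Hamiltonian cycle in a maximal planar graph need not use any edge of a given face (in $K_{2,2,2}$ the cycle $a_1 b_1 c_1 a_2 b_2 c_2$ avoids all three edges of the face $a_1 b_2 c_1$, whose bounding disks sit at positions $1,3,5$ and so all receive the same orientation in your alternating belt), so the one-touch belt never enters that face, and you have not shown that a crossing-free detour to its blocker can always be inserted.
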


\begin{figure}[t]
\centering\includegraphics[scale=0.35]{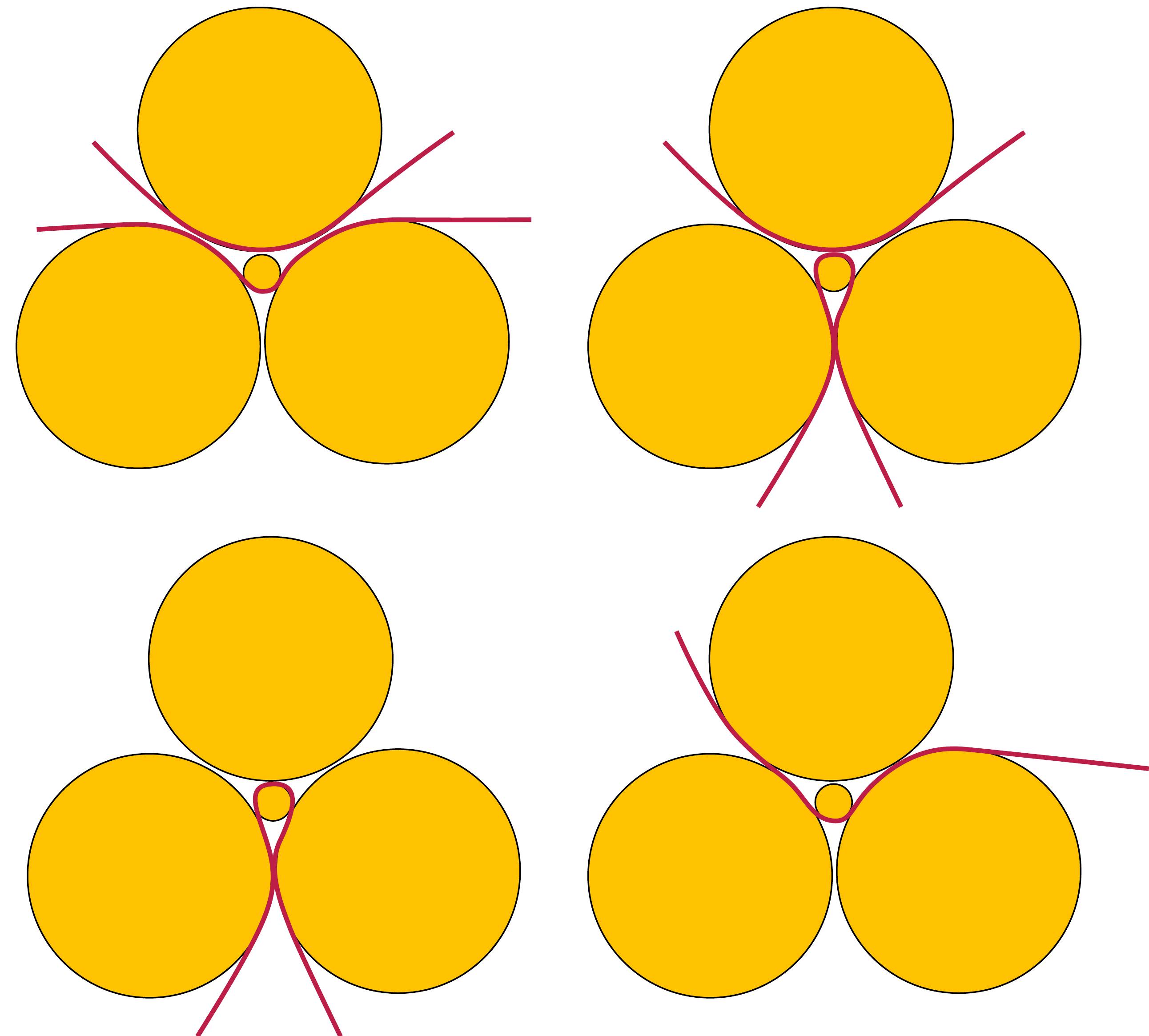}
\caption{The four possible patterns for a conveyor belt to enter and exit a triangle of a circle packing with a fourth smaller circle inside it: Two double-ply crossings to adjacent triangles (top left), one double-ply crossing (bottom left), a double-ply and two single-ply crossings (top right), or two single-ply crossings (bottom right). The single-ply crossings can be tangent to either of the two circles between which they pass. It is not possible for part of the belt to enter on a double-ply crossing and exit on a single-ply crossing, or vice versa.}
\label{fig:pulley-gadget-paths}
\end{figure}

\begin{figure}[t]
\centering\includegraphics[scale=0.28]{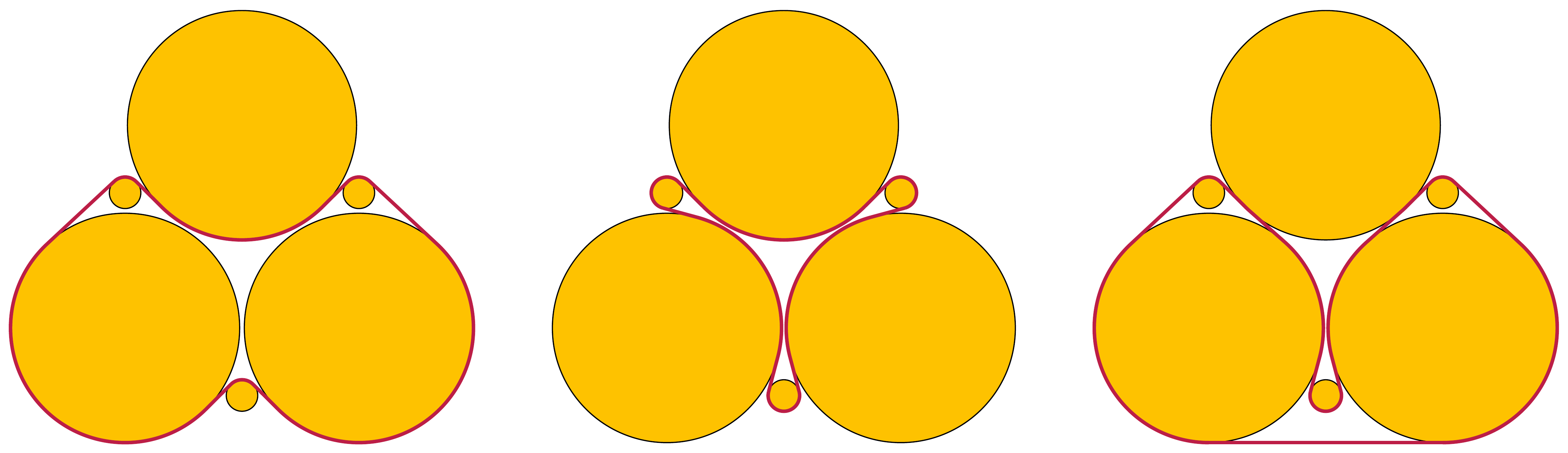}
\caption{Adding three small disks to the outer triangle of a circle packing restricts a conveyor belt to enter and exit the interior of the packing as shown, up to rotation and reflection.}
\label{fig:outside-triangle-gadget}
\end{figure}

\begin{proof}
As in the one-touch case, we reduce from the Hamiltonian circuit problem, but in a different class of graphs, the cubic 3-connected planar graphs. These graphs are the dual graphs of the maximal planar graphs used for the one-touch problem. Again, testing Hamiltonicity of these graphs is known to be NP-complete~\cite{GarJohTar-SICOMP-76}; indeed, Wigderson's reduction in \cite{W82} is to this older result.

Our reduction is similar to the one-touch reduction. Let $G$ be a
cubic 3-connected planar graph with $n>3$ vertices.  Let $G'$ be the
maximal planar graph dual to $G$.  Represent $G'$ by a circle packing,
and then slightly shrink the disks. This time, we modify the
construction by adding another small disk at the radical center of
each triple of disks that represent an interior triangle of $G'$
 (\Cref{fig:pulley-gadget-paths}), and by adding
three more disks tangent to pairs of the outer three disks, small
enough that they lie within the convex hull of the outer three disks
(\Cref{fig:outside-triangle-gadget}). The intent of these changes is
to force the conveyor belt to visit every face of $G'$ and thereby
represent a Hamiltonian cycle for the originally given cubic graph $G$,
whose vertices correspond to these faces. Since $G$ has $n>3$ vertices,
there is at least one interior triple of disks.

For each pair of disks that were initially tangent, a conveyor belt may pass between them at most twice. Call such a crossing \emph{single-ply} if the conveyor belt passes between those disks once and \emph{double-ply} if the conveyor belt passes between those disks twice. All of the ways that a belt can enter and exit one of the inner triangles of the circle packing are depicted in \Cref{fig:pulley-gadget-paths}. By inspecting each case, we find that a belt enters through a single-ply crossing if and only if it exits through a single-ply crossing. It follows that any segment of the conveyor belt passing through the interior of the outer triangle of the circle packing must consist entirely of single-ply or entirely of double-ply crossings.

The disks added for the outside triangle can only be touched by a conveyor belt as shown in \Cref{fig:outside-triangle-gadget}. We consider each depicted case in turn.

\begin{itemize}
\item In \Cref{fig:outside-triangle-gadget} (left), the conveyor belt
enters and exits the interior of the disk construction via two
single-ply connections in regions which correspond to two
neighboring triangles of the outer triangular face of
$G'$. Consequently, if a one-touch conveyor belt exists for the disk
construction, it must be single-ply everywhere in the interior as
shown in the unique single-ply connection pattern from
\Cref{fig:pulley-gadget-paths} (lower right).  Furthermore, the
conveyor belt must enter and exit each region corresponding with a
triangle in $G'$ exactly once, so it must correspond to a
Hamiltonian circuit of $G$. Conversely, it is not hard to see that
every Hamiltonian circuit of $G$ can be transformed to a conveyor belt
on the constructed disk configuration. Therefore, the disk
configuration derived from $G$ has a conveyor belt if and only if $G$
has a Hamiltonian circuit.

\item In \Cref{fig:outside-triangle-gadget} (middle), the conveyor
belt enters and exits the interior of the constructed disk
configuration via three double-ply connections. The interior
connections must then all be double-ply as well, as depicted in
\Cref{fig:pulley-gadget-paths} (upper left) and (lower
left). Furthermore, we can observe from the pictures that both strands
enter together from one gap between disks and exit together from another gap between
disks.  In this sense, double-ply segments form paths.  Starting at one
of the double-ply connections entering the interior from the outside
triangle, the belt must form a path through a sequence of regions
corresponding with triangles in $G'$. If the path ended at
\Cref{fig:pulley-gadget-paths} (lower left), there would be a
disconnected segment of the belt, so it must end at another connection
to the outside triangle. However, the third connection to the outside
triangle would then be disconnected from the other two. So, there are
no belts of this form.

\item In \Cref{fig:outside-triangle-gadget} (right), the conveyor belt
makes two single-ply connections and one double-ply connection to the
interior of the construction. The double-ply connection must pass
through the interior to the single-ply connections in order for the
belt to be connected, though as noted above the interior connections
are all single-ply or all double-ply, so this is not possible and
there are again no belts of this form.
\end{itemize}

Thus, the only possible conveyor belts are ones that makes single-ply
connections only, each of which correspond to a Hamiltonian cycle.
\end{proof}

\section{Guide Disks}

In this section, we explore the question of how many additional disks
must be added in order to guarantee that any disk configuration has a
conveyor belt.  We show that a linear number of additional disks
suffice.  Our construction uses the power diagram of a disk
configuration described in \Cref{sec:preliminaries}.

\begin{theorem}
\label{thm:guide-disks-suffice}
Any system of $n$ disks can be augmented by $O(n)$ additional disks so that the resulting augmented system of disks has a (one-touch) conveyor belt.
\end{theorem}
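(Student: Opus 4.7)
The plan is to use the power diagram of the given $n$ disks as a combinatorial scaffold and then insert guide disks to promote a multi-touch belt to a one-touch belt. First I would compute the power diagram and build its adjacency graph $H$: the vertices are the original disks, and two disks are joined by an edge exactly when their power cells share an edge. By the facts recalled in \Cref{sec:preliminaries}, $H$ is planar with $O(n)$ edges, and each disk lies in the interior of its (convex) power cell.

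Next, I would choose a spanning tree $T$ of $H$ together with its planar embedding inherited from the power diagram. The belt I construct is, topologically, the boundary of a very thin tubular neighborhood of $T$. Concretely, along each tree edge between two tree-adjacent disks $D_i,D_j$ the belt uses both outer bitangents between $D_i$ and $D_j$, routed close to the shared power-diagram edge, and at each disk $v$ of tree-degree $d_v$ the belt forms $d_v$ short arcs on $\partial v$, one in each of the $d_v$ wedges carved out at $v$ by its incident tree edges in cyclic order. Viewed as a limit of $\epsilon$-thickened trees for small $\epsilon$, this gives a simple closed curve built from bitangents and arcs that is a valid (multi-touch) conveyor belt touching each disk $v$ exactly $d_v$ times.

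To promote this to a one-touch conveyor belt, I would designate at each vertex $v$ one wedge as the \emph{principal} wedge, and in each of the remaining $d_v-1$ wedges insert a small guide disk $g$ placed inside the power cell of $v$ within that wedge. The belt in such a non-principal wedge is rerouted so that, instead of entering along a bitangent of $v$, traversing an arc of $v$, and exiting along another bitangent of $v$, it runs from the incoming tree neighbor of $v$ to $g$, around a short arc of $g$, and then from $g$ to the outgoing tree neighbor, with all new bitangents contained in $v$'s power cell. After this local substitution at every non-principal wedge, each original disk is touched only once (in its principal wedge) and each guide disk is touched exactly once. The total number of guide disks is
\[
\sum_{v}(d_v-1) \;=\; 2|E(T)| - n \;=\; n-2,
\]
which is $O(n)$ as required.

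The main obstacle is the geometric verification that this local substitution is always feasible. Specifically, I would need to show that for $g$ chosen sufficiently small and positioned appropriately inside the wedge of $v$'s power cell, (a) the two replacement bitangents from $g$ to the incoming and outgoing neighbors are unblocked by any other disk; (b) the tree-edge bitangents from the neighbors to $v$ that were used in the multi-touch construction can be perturbed, or replaced by bitangents to $g$, so as to fit together with the arc on $g$ into a continuously differentiable curve; and (c) no two pieces of the resulting global curve intersect. All three conditions are local: each substitution is confined to a single wedge of a convex power cell containing no disk other than $v$, and the neighbors touched are the two disks whose power cells border $v$ across that wedge. Convexity of the power cells, disjointness of the original disks from one another, and the freedom to shrink $g$ and push it into any chosen subregion of the wedge should reduce each condition to a routine compactness argument, yielding the desired one-touch conveyor belt on the augmented system of at most $2n-2$ disks.
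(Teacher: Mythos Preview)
Your overall scaffold---power diagram, spanning tree of the cell-adjacency graph, and a belt that traces the boundary of a thin neighborhood of that tree---is exactly the paper's. The gap is in how you realize the tree edges geometrically. You propose to connect two tree-adjacent disks $D_i,D_j$ by their two outer bitangents, ``routed close to the shared power-diagram edge.'' But outer bitangents are fixed segments determined by $D_i$ and $D_j$ alone; they cannot be rerouted, and adjacency of power cells does \emph{not} imply they are unblocked. For a concrete obstruction, take two unit disks $D_1,D_2$ centered at $(\pm 1.1,0)$ and a third disk of radius $0.6$ centered at $(0,1.5)$: one checks that $D_1$ and $D_2$ still share a power-diagram edge along part of the line $x=0$, yet their upper outer bitangent $y=1$ passes through the third disk. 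So your base multi-touch belt need not exist. The same issue recurs in the one-touch step: the bitangent from a neighbor $u$ to your guide disk $g$ must leave $u$'s cell to reach $g$ in $v$'s cell, so the assertion that ``all new bitangents [are] contained in $v$'s power cell'' cannot hold, and those bitangents can be blocked for the same reason.

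The paper sidesteps this by never using direct bitangents between original disks along a tree edge. Instead, for an edge between $C$ and $D$ it picks a point $y$ on the shared power-diagram edge and points $x,z$ just outside $C,D$ on the segments from their centers to $y$; the polyline $xyz$ lies entirely in the union of the two convex cells and hence misses every other disk. Small guide disks near $x,y,z$ then let the belt follow this polyline. This costs $O(1)$ guide disks per tree edge---more than your hoped-for $n-2$, but it is precisely what makes the edge routing go through. Your promotion idea (one principal wedge per vertex, a guide disk in each non-principal wedge) is close in spirit to the paper's one-touch fix, and it would work once the tree edges are routed through the power cells rather than via direct bitangents; as written, however, the argument does not establish the existence of the intermediate multi-touch belt on which it relies.
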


\begin{proof}
First construct the power diagram of the disks. The dual graph of the
power diagram is the graph whose nodes are polygons in the power
diagram, where nodes are connected if their polygons share an
edge. Pick a spanning tree of this dual graph. Imagine traveling
around the ``outside'' of the spanning tree, which gives a cyclic
sequence $S$ of disks and edges of the dual graph. The disks in $S$
may be repeated, consecutive disks have power diagram cells that are
adjacent to each other, and each disk is included at least once in
$S$. Since the number of edges of the power diagram is linear in $n$,
the length of $S$ is linear in $n$.

Let $C$ and $D$ be consecutive disks in $S$ corresponding to an edge
in the spanning tree. Take $y$ to be the
midpoint of the edge between the polygons of $C$ and $D$ in the power
diagram; if the edge is infinite, we may take $y$ to be any point on
the interior of the edge. Let $x$ and $z$ be points on the line
segments from the centers of $C$ and $D$ to $y$ just outside of $C$
and $D$, respectively. We may represent the edge between $C$ and $D$
in the tree geometrically by a polyline consisting of the two line segments
$xy$ and $yz$. The polylines representing distinct edges do not
cross. Taken together, they geometrically represent the spanning tree
of the dual graph. By adding small guide disks near $x$, $y$, and $z$
as needed, we may form a conveyor belt that represents traveling
around the ``outside'' of the spanning tree. Since the length of $S$
is linear, $O(n)$ guide disks are needed.

In the one-touch model, we may perform a similar construction, but when it would use more than one arc of an input disk we may instead route the belt near the edges of the corresponding power diagram polygon for all but one of these arcs. This will require at most a constant number of disks per edge in the power diagram, so there are again $O(n)$ guide disks total.
\end{proof}

\begin{figure}[t]
\centering\includegraphics[width=0.5\textwidth]{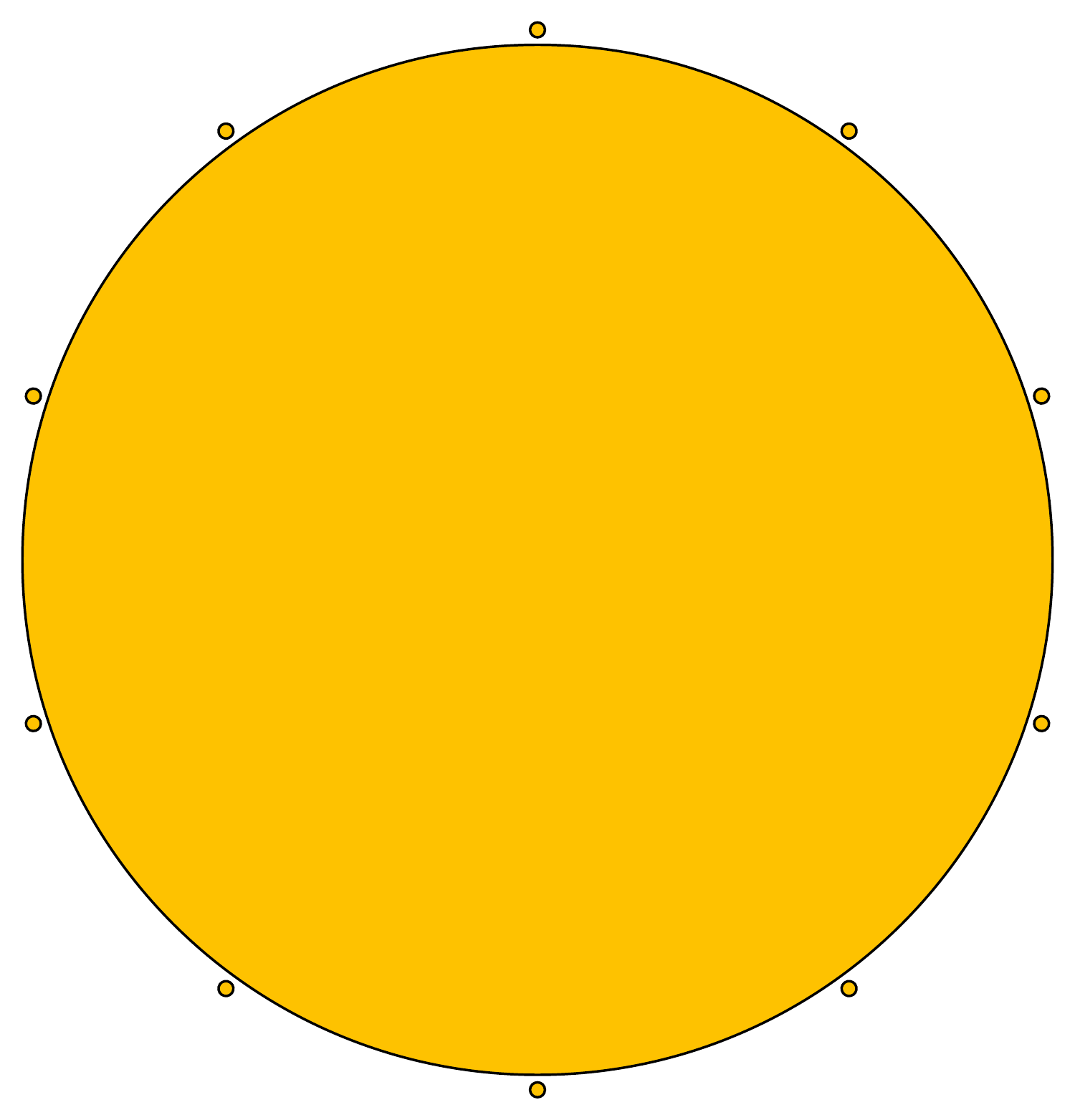}
\caption{A system of $n$ disks requiring $\Omega(n)$ guide disks for a one-touch conveyor belt}
\label{fig:guide-lb}
\end{figure}

\begin{lemma}
\label{lem:one-guide}
There exists a configuration $C$ of disks with the property that,
whenever $C$ appears as part of a larger configuration $C'$, with all disks of $C'\setminus C$ disjoint from the convex hull of $C$, every conveyor belt for $C'$ must include at least one guide disk interior to the convex hull of $C$.
\end{lemma}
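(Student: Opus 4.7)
My plan is to construct $C$ as a small ``cage'' of constant size, as depicted in Figure~\ref{fig:guide-lb}, and then argue by contradiction: if an augmented configuration containing $C'$ admits a conveyor belt but no added disk lies inside the convex hull of $C$, then no such belt can actually exist. The gadget $C$ would consist of a few outer disks whose convex hull strictly traps one or more inner disks $D_0,\ldots$, together with small ``plug'' disks placed at each pairwise contact of the outer disks so that no bitangent can slip tangentially through a single-point gap between them.

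The first step is to establish the key \emph{exclusion property}: every unobstructed bitangent with one endpoint on an inner disk of $C$ has its other endpoint on another disk of $C$, and in particular its entire segment lies inside the convex hull of $C$. This reduces to a finite check among the bitangents of $C$, with the plug disks precisely ensuring that no bitangent can escape tangentially through a tangency point of two outer disks.

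Next, suppose for contradiction that a conveyor belt $\Gamma$ exists for an augmentation in which no additional disks lie inside the convex hull of $C$ beyond those already in $C$. Because $\Gamma$ is a simple closed curve and every non-$C$ disk lies outside $\operatorname{hull}(C)$, the intersection $\Gamma \cap \operatorname{hull}(C)$ is a disjoint union of open arcs, each entering and exiting $\operatorname{hull}(C)$ at tangencies with outer disks of $C$. Since $\Gamma$ must touch every inner disk of $C$, at least one such arc $\gamma$ contains a tangency with an inner disk, and by the exclusion property every bitangent segment of $\gamma$ incident to an inner disk must land on another disk of $C$.

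Finally, I would enumerate the constantly many combinatorial possibilities for how $\gamma$ can be routed among the disks of $C$ --- which disks it visits, in what cyclic order, and on which sides --- and show that each possibility forces $\gamma$ either to cross itself, to enter the interior of some disk of $C$, or to fail to connect its two boundary-of-hull tangencies. This contradiction would establish the lemma. The main obstacle is tuning the gadget precisely (the radii and positions of the outer, inner, and plug disks) so that the case analysis closes for every routing; a secondary difficulty is the multi-touch case, where $\gamma$ may touch the same outer disk more than once within $\operatorname{hull}(C)$, so that both the exclusion property and the enumeration must remain valid under repeated visits.
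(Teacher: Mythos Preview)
Your proposal has a genuine gap: you are hoping that some constant-size ``cage'' gadget will admit no threading by a belt coming from outside, but you give no mechanism for \emph{why} every routing should fail, and in fact the paper's own analysis of the inner-triangle gadget (Figure~\ref{fig:pulley-gadget-paths}) shows that a belt entering such a region \emph{can} reach a trapped inner disk, via single-ply or double-ply crossings. Adding ``plug'' disks near tangencies does not change this qualitative picture: once the belt has slipped into the interior through a gap between two outer disks, it can wind around the inner disks and exit through another gap. So a finite enumeration over a handcrafted constant-size gadget is unlikely to close; you would need to exhibit a specific gadget and actually carry out the enumeration, and you have not done so (the figure you cite, Figure~\ref{fig:guide-lb}, is the one-touch lower-bound picture and is not a candidate for $C$ here).

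The paper's proof takes a completely different route and supplies exactly the missing obstruction. It builds $C$ from the circle-packing construction of \Cref{thm:multi-touch-completeness}, applied to the planar dual $G'$ of a cubic $3$-connected planar graph $G$ whose \emph{longest path} has fewer than $|V(G)|/3$ vertices; such $G$ exist by the shortness-exponent results of Gr\"unbaum and Walther. The three outer circles of the packing leave only three gaps through which a belt can enter $C$, so any belt for $C'$ (with no extra disks inside $\operatorname{hull}(C)$) decomposes inside $C$ into at most three path-like pieces, each corresponding---via the single/double-ply analysis of \Cref{thm:multi-touch-completeness}---to a path in $G$. Since every path in $G$ is short, three of them cannot cover all of $V(G)$, hence some inner disk is missed. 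The crucial idea you are missing is this \emph{global} graph-theoretic obstruction (non-coverability by a bounded number of short paths), which replaces your local case analysis and is what forces $C$ to be large rather than a constant-size cage.
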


\begin{proof}
Let $G$ be a cubic 3-connected planar graph in which the longest path has fewer than $|V(G)|/3$ vertices; the existence of $G$ follows from known results on the shortness exponent of cubic 3-connected planar graphs \cite{GruWal-JCTA-73}. As in the proof of \Cref{thm:multi-touch-completeness}, represent the dual graph of $G$ by a circle packing (choosing arbitrarily which triple of circles to make exterior), place another smaller tangent circle within each triangle of circles of the packing, and then shrink all of the circles by a small amount to allow a conveyor belt to pass between them without making any additional bitangents available to the belt. Let $C$ be the resulting system of circles. 

Then, as in the proof of \Cref{thm:multi-touch-completeness}, any conveyor belt must enter $C$ via at most three single-ply or double-ply crossings between the three pairs of outer circles of $C$.
If there were no guide disks within the convex hull of $C$, the same analysis in \Cref{thm:multi-touch-completeness} would show that such a belt would necessarily correspond to at most three paths through $G$ that either enter $C$ through one of these crossings and exit through another, or enter $C$ through a double-ply crossing and stop somewhere within $C$. However, because all paths are short, it is not possible for three or fewer paths to touch all vertices of $G$; correspondingly, a conveyor belt without a guide disk within $C$ cannot touch all of the smaller circles within each triangle of circles of $C$. Therefore, at least one guide disk within the convex hull of $C$ is needed.
\end{proof}

\begin{theorem}
\label{thm:guide-disks-necessary}
In order for one-touch or unconstrained conveyor belts to include $n$ given disks, $\Omega(n)$ guide disks are sometimes necessary.
\end{theorem}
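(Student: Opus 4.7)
The plan is to place many well-separated copies of the configuration from \Cref{lem:one-guide} so that each copy's convex hull must independently absorb at least one guide disk.

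Let $C$ be the configuration produced by \Cref{lem:one-guide} and set $m := |C|$, a constant. Given $n$, I would take $k := \lfloor n/m \rfloor$ translated copies $C_1, \ldots, C_k$ of $C$, with successive translations large enough that the convex hulls $\mathrm{conv}(C_i)$ are pairwise disjoint (for instance, translating copy $i$ to the right by $i$ times the diameter of $C$). The input configuration is then $C^{\ast} := C_1 \cup \cdots \cup C_k$, with $|C^{\ast}| = km = \Theta(n)$.

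Next I would show that any set $G$ of guide disks for which $C^{\ast} \cup G$ admits a conveyor belt (one-touch or unconstrained) must satisfy $|G| \geq k$. Suppose for contradiction that $|G| < k$. Then by pigeonhole some hull $\mathrm{conv}(C_i)$ contains no disk of $G$. All disks of $(C^{\ast} \cup G) \setminus C_i$ — namely the other copies $C_j$ with $j \ne i$ together with all of $G$ — lie outside $\mathrm{conv}(C_i)$, using the pairwise disjointness of the hulls together with our hypothesis on $G$. Invoking \Cref{lem:one-guide} with its ``$C$'' taken to be $C_i$ and its ``$C'$'' taken to be $C^{\ast} \cup G$, the hypothesis of the lemma is satisfied and its conclusion forces some guide disk to lie interior to $\mathrm{conv}(C_i)$, contradicting the choice of $i$. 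Hence $|G| \geq k = \Omega(n)$.

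The substantive content of the lower bound already sits inside \Cref{lem:one-guide}, so what remains here is essentially bookkeeping: arranging the copies with disjoint hulls and checking that disjointness preserves the lemma's hypothesis locally around each $C_i$. The main thing to watch for is the double role played by guide disks in $\mathrm{conv}(C_i)$ — they are simultaneously what we want to count and what could violate the lemma's precondition — but the pigeonhole-plus-contradiction framing sidesteps this cleanly. The argument applies verbatim to both the one-touch and unconstrained models, since \Cref{lem:one-guide} is proved uniformly for both.
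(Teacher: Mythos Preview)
Your argument is correct, and for the unconstrained case it matches the paper's proof essentially verbatim. For the one-touch case, however, the paper takes a different and much more elementary route: it places $n-1$ small disks at the vertices of a regular polygon and one large disk at its center, sized so that no two small disks have an unblocked bitangent (\Cref{fig:guide-lb}); in any one-touch belt, consecutive small disks must then be separated by a contact with some other disk, and only one of those separators can be the central disk, forcing at least $n-2$ guide disks. Your uniform approach via \Cref{lem:one-guide} does also work for one-touch---a one-touch belt is in particular a conveyor belt, so the lemma's conclusion applies a fortiori, though ``proved uniformly for both'' slightly overstates this---but it leans on the heavy machinery behind that lemma (circle packings, the shortness exponent of cubic planar graphs) and yields only $\lfloor n/|C|\rfloor$ for a large fixed $|C|$. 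The paper's one-touch construction, by contrast, is self-contained and nearly tight. What your approach buys is economy: a single argument covers both models.
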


\begin{proof}
For the one-touch case,
place $n-1$ small disks at the vertices of a regular polygon, and a large disk at the center of the polygon, in such a way that no two small disks can see each other (\Cref{fig:guide-lb}).  Then, in the cyclic sequence of disks given by any one-touch conveyor belt, each of the small disks must be separated from each other by a contact with some other disk, but only one of those contacts can be the large disk. Therefore, there must be at least $n-2$ other guide disks to provide these contacts.

For the unconstrained case, we form a configuration of $\lfloor n/|C|\rfloor$ copies of the configuration $C$ described in \Cref{lem:one-guide}, with disjoint convex hulls. Each copy of $C$ requires a guide disk interior to its convex hull, so all the copies together require at least $\lfloor n/|C|\rfloor$ guide disks.
\end{proof}

\section{Future Work}\label{}
We conclude with a few  questions for further exploration:

\begin{enumerate}
\item How many guide disks are necessary to ensure that any traveling
salesman tour on the centers of $n$ disks can be a subsequence of a
conveyor belt?
\item Is the problem of finding the number of conveyor belts for a
given disk configuration \#P-complete?
\item Given $n$ points in the plane, is finding the number of polygons
with vertices at the $n$ points also \#P-complete?  See
\cite{Eppstein.2019} for related results and further questions on
hardness of counting problem in discrete geometry.
\end{enumerate}   

\section*{Acknowledgments}

This work was initiated during an open problem session held at the University of
Washington in 2016 while the Demaines were Walker--Ames Lecturers,
sponsored by the UW Graduate School.
We thank Paul Beame, Timea Tihanyi, Ron Irving, and Jamie Walker
for their support during the Walker-Ames nomination process;
and the 25 students and faculty in math, art, and computer science
who attended the session and tackled the problem.
This work was continued at the 32nd Bellairs Winter Workshop on Computational
Geometry, held in 2017 at the Bellairs Research Institute, Barbados.
We thank the other participants of that workshop for their input and
for creating an exciting atmosphere.

\bibliographystyle{abbrvurl}
\bibliography{belts}
\end{document}